\title{A Categorical Theory of Patches}\author{Samuel Mimram \and Cinzia Di Giusto}\date{CEA, LIST\footnote{This work was partially supported by the French project ANR-11-INSE-0007 REVER.}}
\newcommand{\vxym}[1]{\vcenter{\xymatrix@R=2ex@C=2ex{#1}}}
\newcommand{\qcong}{\quad\cong\quad}
\renewcommand{\graph}{\mathcal{G}}
\newcommand{\da}[1]{{\downarrow}{#1}}
\newenvironment{proposition*}[1]{\vspace{-\lastskip}\par \addvspace{.6pc
    plus .2pc minus .1pc}  \noindent {\bf #1} \begin{it}}{\end{it}\par\addvspace{.6pc
    plus .2pc minus .1pc}}
\begin{document}\maketitle


  %
  %
  
  \begin{abstract}
    When working with distant collaborators on the same documents, one often
    uses a version control system, which is a program tracking the history of
    files and helping importing modifications brought by others as patches. The
    implementation of such a system requires to handle lots of situations
    depending on the operations performed by users on files, and it is thus
    difficult to ensure that all the corner cases have been correctly
    addressed. Here, instead of verifying the implementation of such a system,
    we adopt a complementary approach: we introduce a theoretical model, which
    is defined abstractly by the universal property that it should satisfy, and
    work out a concrete description of it. We begin by defining a category of
    files and patches, where the operation of merging the effect of two
    coinitial patches is defined by pushout. Since two patches can be
    incompatible, such a pushout does not necessarily exist in the category, which raises the
    question of which is the correct category to represent and manipulate files
    in conflicting state. We provide an answer by investigating the free
    completion of the category of files under finite colimits, and give an
    explicit description of this category: its objects are finite sets labeled by lines
    equipped with a transitive relation and morphisms are partial functions respecting
    labeling and relations.
  \end{abstract}

\section{Introduction}

It is common nowadays, when working with distant collaborators on the same files
(multiple authors writing an article together for instance), to use a program
which will track the history of files and handle the operation of importing
modifications of other participants. These software called
\emph{version control systems} (\VCS for short), like \git or Darcs,
implement two main operations. When a user is happy with the changes it has
brought to the files it can record those changes in a \emph{patch} (a file
coding the differences between the current version and the last recorded
version) and \emph{commit} them to a server, called a \emph{repository}. The
user can also \emph{update} its current version of the file by importing new
patches added by other users to the repository and applying the corresponding
modifications to the files. One of the main difficulties to address here is that there is no
global notion of ``time'': patches are only partially ordered. For
instance consider a repository with one file~$A$ and two users~$u_1$
and~$u_2$. Suppose that $u_1$ modifies file~$A$ into~$B$ by committing a
patch $f$, which is then imported by $u_2$, and then~$u_1$ and $u_2$
concurrently modify the file~$B$ into $C$ (\resp $D$) by committing a patch $g$
(\resp $h$). The evolution of the file is depicted on the left and the partial
ordering of patches in the middle:
\[
\vxym{
  C&&D\\
  &B\ar[ul]^g\ar[ur]_h\\
  &A\ar[u]^f
}
\qquad\qquad\qquad
\vxym{
  g&&h\\
  &\ar[ul]f\ar[ur]&
}
\qquad\qquad\qquad
\vxym{
  &E&\\
  C\ar[ur]^{h/g}&&\ar[ul]_{g/h}D\\
  &B\ar[ul]^g\ar[ur]_h\\
  &A\ar[u]^f
}
\]
Now, suppose that $u_2$ imports the patch $g$ or that $u_1$ imports the patch
$h$. Clearly, this file resulting from the \emph{merging} of the two patches
should be the same in both cases, call it~$E$. One way to compute this file, is
to say that there should be a patch $h/g$, the \emph{residual} of $h$ after~$g$,
which transforms $C$ into~$E$ and has the ``same effect'' as $h$ once $g$ has
been applied, and similarly there should be a patch $g/h$ transforming $D$ into
$E$. Thus, after each user has imported changes from the other, the evolution
of the file is as pictured on the right above.
In this article, we introduce a category~$\linear$ whose objects are files and
morphisms are patches. Since residuals should be computed in the most general
way, we formally define them as the arrows of pushout cocones, \ie the square in
the figure on the right should be a pushout.

However, as expected, not every pair of coinitial morphisms have a pushout in
the category~$\linear$: this reflects the fact that two patches can be
conflicting (for instance if two users modify the same line of a
file). Representing and handling such conflicts in a coherent way is one of the
most difficult part of implementing a \VCS (as witnessed for instance by the
various proposals for Darcs: mergers, conflictors, graphictors,
etc.~\cite{darcs-theory}). In order to be able to have a representation for
all conflicting files, we investigate the free completion of the
category~$\linear$ under all pushouts, this category being
denoted~$\concurrent$, which corresponds to adding all conflicting files to the
category, in the most general way as possible. This category can easily be
shown to exist for general abstract reasons, and one of the main contributions of
this work is to provide an explicit description by applying the theory of
presheaves. This approach paves the way towards the implementation of a
\VCS whose correctness is deduced from universal categorical properties.
%
%

\medskip\noindent\emph{Related work.}
The Darcs community has investigated a formalization of patches based on
commutation properties~\cite{darcs-theory}. \emph{Operational
  transformations} tackle essentially the same issues by axiomatizing the
notion of residual patches~\cite{ressel1996integrating}. In both cases, the fact
that residual should form a pushout cocone is never explicitly stated,
excepting in informal sentences saying that ``$g/f$ should have the same effect
as $g$ once $f$ has been applied''. We should also mention another interesting
approach to the problem using inverse semigroups
in~\cite{jacobson2009formalization}.
%
Finally, Houston has proposed a category with pushouts, similar to ours, in
order to model conflicting files~\cite{bosker}, see
Section~\ref{sec:extensions}.

%

\medskip\noindent\emph{Plan of the paper.}
We begin by defining a category~$\linear$ of files and patches in
Section~\ref{sec:cat}. Then, in Section~\ref{sec:conflicting-cat}, we abstractly
define the category~$\concurrent$ of conflicting files obtained by free finite
cocompletion. Section~\ref{sec:cocompl} provides a concrete description of the
construction in the simpler case where patches can only insert lines. We give some 
concrete examples in Section~\ref{sec:examples} and adapt the framework to the
general case in Section~\ref{sec:extensions}. We conclude in
Section~\ref{sec:future-works}.


\section{Categories of files and patches}
\label{sec:cat}
In this section, we investigate a model for a simplified \VCS: it handles only
one file and the only allowed operations are insertion and deletion of lines
(modification of a line can be encoded by a deletion followed by an insertion).
We suppose fixed a set $\lines=\set{a,b,\ldots}$ of \emph{lines} (typically
words over an alphabet of characters). A \emph{file}~$A$ is a finite sequence of
lines, which will be seen as a function $A:[n]\to\lines$ for some number of
lines $n\in\N$, where the set~$[n]=\set{0,1,\ldots,n-1}$ indexes the lines of
the files. For instance, a file $A$ with three lines such that $A(0)=a$,
$A(1)=b$ and $A(2)=c$ models the file $abc$. Given $a\in \lines$, we sometimes
simply write $a$ for the file $A:[1]\to\lines$ such that $A(0)=a$.
A morphism between two files $A:[m]\to\lines$ and $B:[n]\to\lines$ is an
injective increasing partial function $f:[m]\to[n]$ such that $\forall
i\in[m],B\circ f(i)=A(i)$ whenever $f(i)$ is defined. Such a morphism is called
a \emph{patch}.

\begin{definition}
  \label{def:linear}
  The category $\linear$ has files as objects and patches as morphisms.
\end{definition}

Notice that the category~$\linear$ is strictly monoidal with
$[m]\otimes[n]=[m+n]$ and for every file $A:[m]\to\lines$ and $B:[n]\to\lines$,
$(A\otimes B)(i)=A(i)$ if $i<m$ and $(A\otimes B)(i)=B(i-m)$ otherwise, the unit
being the empty file $I:[0]\to\lines$, and tensor being defined on morphisms in
the obvious way. The following proposition shows that patches are generated by
the operations of inserting and deleting a line:

\begin{proposition}
  \label{prop:presimp-free-moncat}
  The category~$\linear$ is the free monoidal category containing~$\lines$ as
  objects and containing, for every line $a\in\lines$, morphisms $\eta_a:I\to a$
  (insertion of a line $a$) and $\varepsilon_a:a\to I$ (deletion of a line $a$)
  such that $\varepsilon_a\circ\eta_a=\id_I$ (deleting an inserted line amounts
  to do nothing).
\end{proposition}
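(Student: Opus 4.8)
The plan is to identify $\linear$ with the free strict monoidal category $\mathcal{F}$ generated by the stated data --- the objects of $\lines$, together with the morphisms $\eta_a\colon I\to a$ and $\varepsilon_a\colon a\to I$ --- subject to the relation $\varepsilon_a\circ\eta_a=\id_I$. Concretely, the objects of $\mathcal{F}$ are finite words over $\lines$ and its morphisms are formal $\otimes$-and-$\circ$-composites of the generators and identities, taken modulo the axioms of strict monoidal categories and that one relation. Since $\linear$ is strictly monoidal and its insertion/deletion patches satisfy $\varepsilon_a\circ\eta_a=\id_I$, the universal property of $\mathcal{F}$ yields a unique strict monoidal functor $F\colon\mathcal{F}\to\linear$ with $F(a)=a$, $F(\eta_a)$ the insertion patch $I\to a$ and $F(\varepsilon_a)$ the deletion patch $a\to I$. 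It then suffices to prove that $F$ is an isomorphism of categories. It is the identity on objects, since both sides have finite sequences of lines as objects and $F$, being strict monoidal, sends the word $a_1\cdots a_n$ to the file $(a_1,\dots,a_n)$; so the content is that $F$ is full and faithful.

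For fullness I would show that every patch factors as a sequence of deletions followed by a sequence of insertions. Given a patch $f\colon A\to B$ with $A\colon[m]\to\lines$ and $B\colon[n]\to\lines$, let $D\subseteq[m]$ be the domain of definition of $f$. Deleting from $A$, one at a time, the lines with index outside $D$ --- each deletion being an arrow $\id_X\otimes\varepsilon_a\otimes\id_Y$ --- produces the subfile $A'\colon[k]\to\lines$ spanned by $D$ together with an arrow $A\to A'$ in the image of $F$; the map $A'\to B$ induced by $f$ is then a total, injective, increasing, label-preserving function, and inserting into $A'$, one at a time, the lines of $B$ with index outside its image --- each insertion being an arrow $\id_X\otimes\eta_a\otimes\id_Y$ --- produces $B$ together with an arrow $A'\to B$ in the image of $F$, whose composite with $A\to A'$ is exactly $f$. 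Hence $f$ lies in the image of $F$.

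Faithfulness is the crux and needs a normal form theorem for $\mathcal{F}$. Call a morphism \emph{normal} when it is a composite of elementary deletions $\id_X\otimes\varepsilon_a\otimes\id_Y$ followed by a composite of elementary insertions $\id_X\otimes\eta_a\otimes\id_Y$, such that (i) the deletions occur in order of decreasing position and the insertions in order of increasing position, and (ii) no inserted line is immediately deleted, so that $\varepsilon_a\circ\eta_a=\id_I$ applies nowhere at the junction. I would first argue that every morphism of $\mathcal{F}$ equals a normal one: by bifunctoriality of $\otimes$ one slides each $\varepsilon$ past any $\eta$ acting on a disjoint position so as to bring all deletions before all insertions; by the relation $\varepsilon_a\circ\eta_a=\id_I$ one erases any $\eta$ followed immediately by the deletion of the line it just created; and again by bifunctoriality one bubble-sorts the resulting deletion block and insertion block into the prescribed orders. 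Each such move strictly decreases an obvious well-founded measure (the number of generators, for the erasures; a lexicographic count of out-of-order pairs, for the slidings and sortings), so the procedure terminates at a normal form. I would then observe that $F$ is injective on normal forms: from the patch $F(\phi)\colon A\to B$ one reads off precisely which positions of $A$ are undefined (these are the deleted lines, and decreasing order fixes their schedule) and which positions of $B$ are outside the image (these are the inserted lines, and increasing order fixes their schedule), so $\phi$ is determined by $A$, $B$ and $F(\phi)$. The two facts combine: if $F(\phi)=F(\psi)$, pick normal forms $\phi'$ for $\phi$ and $\psi'$ for $\psi$; then $F(\phi')=F(\psi')$, hence $\phi'=\psi'$, hence $\phi=\psi$.

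The main obstacle is the normal form argument behind faithfulness: the delicate part is bookkeeping the reindexing of positions as the elementary generators (tensored with identities) are slid past one another, cancelled, and sorted, and checking that each family of rewriting moves terminates. The rest --- well-definedness of $F$, its being the identity on objects, and fullness --- is routine. Note that confluence of the rewriting is not actually required: it is enough that every morphism reaches \emph{some} normal form and that $F$ is injective on normal forms.
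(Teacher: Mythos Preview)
The paper does not actually prove this proposition: it is stated without proof, and the closely related unlabeled version (Proposition~\ref{prop:presimp-free-cat}) is likewise only stated, with a pointer to Mac~Lane. There is therefore nothing in the paper to compare your argument against.

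That said, your approach is the standard one and is correct. A couple of minor remarks. First, your condition~(ii) in the definition of normal form is vacuous once all deletions have been moved before all insertions, so you can simply drop it from the statement (it plays a role only during the rewriting, not in the final normal forms). Second, the rewriting step ``slide each $\varepsilon$ past any $\eta$'' should be stated a bit more carefully: when the $\varepsilon$ acts on the very line just created by the $\eta$ you cancel using the relation, and otherwise bifunctoriality gives the swap with the expected index shift; you clearly have this in mind, but it is worth making the case split explicit. With those cosmetic fixes the argument goes through: every morphism reaches a normal form, and a normal form is determined by its domain of definition and its image, which are read off directly from the partial function $F(\phi)$.
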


\begin{example}
  The patch corresponding to transforming the file $abc$ into $dadeb$, by
  deleting the line~$c$ and inserting the lines labeled by $d$ and $e$, is
  modeled by the partial function $f:[3]\to[5]$ such that $f(0)=1$ and $f(1)=4$
  and $f(2)$ is undefined. Graphically,
  \[
  \begin{tikzpicture}[xscale=2.,yscale=0.6]
    \draw (0,2) node[left]{$a$};
    \draw (0,1.5) node[left]{$b$};
    \draw (0,1) node[left]{$c$};
    \draw (1,2) node[right]{$d$};
    \draw (1,1.5) node[right]{$a$};
    \draw (1,1) node[right]{$d$};
    \draw (1,.5) node[right]{$e$};
    \draw (1,0) node[right]{$b$};
    \draw (0,2) -- (1,1.5);
    \draw (0,1.5) -- (1,0);
  \end{tikzpicture}
  \]
  The deleted line is the one on which $f$ is not defined and the inserted lines
  are those which are not in the image of~$f$. In other words, $f$ keeps track of
  the unchanged lines.
\end{example}

In order to increase readability, we shall consider the particular case
where~$\lines$ is reduced to a single element. In this \emph{unlabeled} case,
the objects of~$\linear$ can be identified with integers (the labeling function
is trivial), and Proposition~\ref{prop:presimp-free-moncat} can be adapted to
achieve the following description of the category, see also~\cite{maclane:cwm}.

\begin{proposition}
  \label{prop:presimp-free-cat}
  If $\lines$ is reduced to a singleton, the category~$\linear$ is the free
  category whose objects are integers and morphisms are generated by $s_i^n:n\to
  n+1$ and $d_i^n:n+1\to n$ for every $n\in\N$ and $i\in[n+1]$ (respectively
  corresponding to insertion and deletion of a line at $i$-th position), subject
  to the relations
  \begin{equation}
    \label{eq:simpl-ax}
    s_i^{n+1}s_j^n=s_{j+1}^{n+1}s_i^n
    \qquad\qquad
    d_i^ns_i^n=\id_{n}
    \qquad\qquad
    d_i^nd_j^{n+1}=d_j^nd_{i+1}^{n+1}
  \end{equation}
  whenever $0\leq i\leq j<n$.
\end{proposition}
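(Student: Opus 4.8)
The plan is to deduce Proposition~\ref{prop:presimp-free-cat} from Proposition~\ref{prop:presimp-free-moncat} by specializing to the case where $\lines$ is a singleton. When $\lines=\set{\ast}$, the labeling data becomes trivial, so an object of $\linear$ is entirely determined by the natural number $n$ with $[n]\to\lines$, and the free monoidal category of Proposition~\ref{prop:presimp-free-moncat} collapses: since there is only one generating object, the monoidal product of objects is just addition of integers, and we are looking at the strict monoidal category freely generated by one object $\ast$, one morphism $\eta:I\to\ast$ and one morphism $\varepsilon:\ast\to I$ subject to $\varepsilon\circ\eta=\id_I$. First I would make precise the translation between the monoidal presentation and the presentation by the $s_i^n$ and $d_i^n$: the insertion $s_i^n\colon n\to n+1$ is $\id_i\otimes\eta\otimes\id_{n-i}$ (in the notation where $\id_k$ denotes the identity on the object $k$), and dually $d_i^n\colon n+1\to n$ is $\id_i\otimes\varepsilon\otimes\id_{n-i}$. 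Conversely $\eta=s_0^0$ and $\varepsilon=d_0^0$.

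The core of the argument is then a standard ``change of presentation'' for monoidal categories freely generated by a single object together with generating morphisms: such a category is isomorphic to the plain category whose objects are natural numbers and whose morphisms are generated by the whiskered copies $\id_i\otimes\phi\otimes\id_j$ of each generating morphism $\phi$, modulo (i) the interchange relations coming from functoriality of $\otimes$, which let one slide two whiskered generators past each other, and (ii) the whiskered copies of the original defining relations. Applying this recipe to $\eta$ and $\varepsilon$, the interchange relations between two insertions give $s_{j+1}^{n+1}s_i^n = s_i^{n+1}s_j^n$, between two deletions give $d_i^n d_j^{n+1} = d_j^n d_{i+1}^{n+1}$, and between an insertion and a deletion at disjoint positions give the ``mixed'' relations; the whiskered copies of $\varepsilon\circ\eta=\id_I$ give $d_i^n s_i^n=\id_n$. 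I would then observe that the mixed interchange relations (an insertion far from a deletion commutes with it) are in fact derivable from the three families listed in~\eqref{eq:simpl-ax} — this is the one genuinely computational point, and it is exactly the analogue of the classical derivation, in the simplicial category $\Delta$, of the remaining simplicial identities from a minimal set; one uses $d_i s_i=\id$ to insert and cancel degeneracies and shuffles indices with the pure $s$–$s$ and $d$–$d$ relations. Hence the presentation with relations~\eqref{eq:simpl-ax} suffices.

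The main obstacle I anticipate is precisely this last bookkeeping step: checking that no relations have been lost, i.e.\ that the three families in~\eqref{eq:simpl-ax}, under the index constraint $0\le i\le j<n$, really do generate all the interchange relations (including the mixed insertion/deletion ones and the ``wrong-order'' versions obtained by renaming indices). This is where an off-by-one error is easy to make, and where one must be careful that the stated inequalities are the right ones — note the resemblance to, but not identity with, the simplicial identities, since here $s_i^n d_i^n$ need \emph{not} be considered (only $d_i^n s_i^n=\id_n$ holds, reflecting $\varepsilon\circ\eta=\id$ but \emph{not} $\eta\circ\varepsilon=\id$). I would handle it by a normal-form argument: every morphism of the free category on these generators and relations can be rewritten as a composite of deletions followed by insertions (a ``deletions then insertions'' factorization, pushing all $d$'s to the right using the mixed relations and cancellations), and two such normal composites represent the same partial increasing injection $[m]\to[n]$ if and only if they are equal as normal forms; matching this against the explicit description of morphisms of $\linear$ as injective increasing partial functions then yields the isomorphism. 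Everything else — that the assignment $s_i^n,d_i^n\mapsto$ the corresponding partial functions is well defined on the presented category, and that it is a bijection on hom-sets — is routine verification of the simplicial-style identities for partial functions.
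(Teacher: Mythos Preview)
Your overall strategy—specialize Proposition~\ref{prop:presimp-free-moncat} to a single generating object and convert the monoidal presentation to a plain one by whiskering the generators—is exactly what the paper indicates (it gives no proof beyond ``Proposition~\ref{prop:presimp-free-moncat} can be adapted\ldots see also~\cite{maclane:cwm}''). The gap is at the step you yourself flag as the crux: the claim that the mixed $d$--$s$ interchange relations are \emph{derivable} from the three families in~\eqref{eq:simpl-ax}. Converting the monoidal presentation honestly produces, in addition to the $s$--$s$ and $d$--$d$ interchange laws and the whiskered $\varepsilon\eta=\id$, the mixed laws $d_j^{\,n} s_i^{\,n} = s_i^{\,n-1} d_{j-1}^{\,n-1}$ for $j>i$ and $d_j^{\,n} s_i^{\,n} = s_{i-1}^{\,n-1} d_j^{\,n-1}$ for $j<i$. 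Your justification for dropping them—an analogy with the simplicial category~$\Delta$—does not hold: the standard presentation of~$\Delta$ \emph{includes} the mixed face--degeneracy identities; they are not consequences of the pure face and pure degeneracy relations alone. Concretely, try to identify $d_1^{\,1} s_0^{\,1}$ with $s_0^{\,0} d_0^{\,0}$ (both represent the empty partial map $[1]\to[1]$ in~$\linear$) using only~\eqref{eq:simpl-ax}: no rule rewrites a pattern $d_j s_i$ with $j\neq i$, and inserting an identity via $d_k s_k=\id$ followed by reshuffling with the pure $s$--$s$ or $d$--$d$ rules never eliminates that pattern. Your normal-form argument (``push all $d$'s to the right'') then begs the question, since pushing a $d$ past an $s$ \emph{is} an instance of a mixed relation.

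In short, the list~\eqref{eq:simpl-ax} appears to be incomplete for a presentation of~$\linear$ as a plain category (and the third relation seems to carry an index slip: for $n=2$, $i=0$, $j=1$ the two sides are visibly different partial maps $[4]\to[2]$). The sound version of your plan is to keep \emph{all} the relations produced by the monoidal-to-plain translation—pure and mixed—and run the normal-form argument against that full list; do not try to argue the mixed ones away.
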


\noindent
We will also consider the subcategory $\tlinear$ of $\linear$, with same
objects, and \emph{total} injective increasing functions as morphisms. This
category models patches where the only possible operation is the insertion of
lines: Proposition~\ref{prop:presimp-free-moncat} can be adapted to show
that~$\tlinear$ is the free monoidal category containing morphisms $\eta_a:I\to
a$ and, in the unlabeled case, Proposition~\ref{prop:presimp-free-cat} can be
similarly adapted to show that it is the free category generated by morphisms
$s_i^n:n\to n+1$ satisfying $s_i^{n+1}s_j^n=s_{j+1}^{n+1}s_i^n$ with $0\leq
i\leq j<n$.

\section{Towards a category of conflicting files}
\label{sec:conflicting-cat}
Suppose that~$A$ is a file which is edited by two users, respectively applying
patches $f_1:A\to A_1$ and $f_2:A\to A_2$ to the file. For instance,
\begin{equation}
  \label{eq:no-conflict}
  \begin{matrix}
    a&c&c&b
  \end{matrix}
  \quad\xleftarrow{f_1}\quad
  \begin{matrix}
    a&b
  \end{matrix}
  \quad\xrightarrow{f_2}\quad
  \begin{matrix}
    a&b&c&d
  \end{matrix}
\end{equation}
Now, each of the two users imports the modification from the other one. The
resulting file, after the import, should be the smallest file containing both
modifications on the original file: $accbcd$.
It is thus natural to state that it should be a pushout of the
diagram~\eqref{eq:no-conflict}.
Now, it can be noticed that not every diagram in~$\linear$ has a pushout. For
instance, the diagram
\begin{equation}
  \label{eq:no-pushout}
  \begin{matrix}
    a&c&b
  \end{matrix}
  \quad\xleftarrow{f_1}\quad
  \begin{matrix}
    a&b
  \end{matrix}
  \quad\xrightarrow{f_2}\quad
  \begin{matrix}
    a&d&b
  \end{matrix}
\end{equation}
does not admit a pushout in~$\linear$. In this case, the two patches~$f_1$
and~$f_2$ are said to be \emph{conflicting}.

In order to represent the state of files after applying two conflicting patches,
we investigate the definition of a category~$\concurrent$ which is obtained by
completing the category~$\linear$ under all pushouts. Since, this completion
should also contain an initial object (\ie the empty file), we are actually
defining the category~$\concurrent$ as the free completion of $\linear$ under
finite colimits: recall that a category is finitely cocomplete (has all finite
colimits) if and only if it has an initial object and is closed under
pushouts~\cite{maclane:cwm}. Intuitively, this category is obtained by adding
files whose lines are not linearly ordered, but only partially ordered, such as
on the left of
\begin{equation}
  \label{eq:ex-conflict}
  \vxym{
    &\ar[dl]a\ar[dr]&\\
    c\ar[dr]&&\ar[dl]d\\
    &b&
  }
  \qquad\qquad\qquad\qquad\qquad
  \textbf{\texttt{\tiny
    \begin{tabular}{l}
      a\\[-1ex]
      <<<<<<< HEAD\\[-1ex]
      c\\[-1ex]
      =======\\[-1ex]
      d\\[-1ex]
      >>>>>>> 5c55\ldots\\[-1ex]
      b
    \end{tabular}
  }}
\end{equation}
which would intuitively model the pushout of the diagram~\eqref{eq:no-pushout}
if it existed, indicating that the user has to choose between~$c$ and~$d$ for
the second line. Notice the similarities with the corresponding textual notation
in \git on the right.
The name of the category~$\linear$ reflects the facts that its objects are files
whose lines are linearly ordered, whereas the objects of $\concurrent$ can be
thought as files whose lines are only partially ordered. More formally, the
category is defined as follows.

\begin{definition}
  \label{def:free-cocompl}
  The category~$\concurrent$ is the \emph{free finite conservative cocompletion}
  of~$\linear$: it is (up to equivalence of categories) the unique finitely
  cocomplete category together with an embedding functor
  $y:\linear\to\concurrent$ preserving finite colimits, such that for every
  finitely cocomplete category~$\C$ and functor $F:\linear\to\C$ preserving finite colimits, there exists,
  up to unique isomorphism, a unique functor $\tilde F:\concurrent\to\C$
  preserving finite colimits and satisfying $\tilde F\circ y=F$:
  \[
  \vxym{
    \linear\ar[d]_y\ar[r]^F&\C\\
    \concurrent\ar@{.>}[ur]_{\tilde F}
  }
  \]
\end{definition}

\noindent
Above, the term \emph{conservative} refers to the fact that we preserve colimits
which already exist in~$\linear$ (we will only consider such completions here).
The ``standard'' way to characterize the category~$\concurrent$, which always exists, is to use the following folklore
theorem, often attributed to Kelly~\cite{kelly1982basic,adamek1994locally}:

\begin{theorem}
  \label{thm:cons-cocomp}
  The conservative cocompletion of the category~$\linear$ is equivalent to the
  full subcategory of~$\hat\linear$ whose objects are presheaves which preserve
  finite limits, \ie the image of a limit in $\linear^\op$ (or equivalently a
  colimit in $\linear$) is a limit in $\Set$ (and limiting cones are transported
  to limiting cones). The finite conservative cocompletion~$\concurrent$ can be
  obtained by further restricting to presheaves which are finite colimits of
  representables.
\end{theorem}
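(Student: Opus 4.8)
The plan is to produce the two descriptions in turn, first the conservative cocompletion under \emph{all} small colimits and then its restriction to the finite case. For the first part I would start from the Yoneda embedding $y\colon\linear\to\hat\linear$, which is the free cocompletion of~$\linear$ under all small colimits: a functor $F\colon\linear\to\C$ into a cocomplete category extends uniquely, up to isomorphism, to a cocontinuous functor $\hat\linear\to\C$, namely the left Kan extension along~$y$, computed pointwise by the coend $P\mapsto\int^{A\in\linear}P(A)\cdot F(A)$. To make this cocompletion \emph{conservative} I would localize $\hat\linear$ at the set~$\Sigma$ of canonical comparison maps $\alpha_D\colon\mathrm{colim}_j\,y(D_j)\to y(\mathrm{colim}_j\,D_j)$, one for each finite diagram $D\colon J\to\linear$ that admits a colimit (this is honestly a set, because~$\linear$ and the collection of finite diagram shapes are small). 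Since~$\hat\linear$ is locally presentable, the orthogonality class~$\Sigma^\perp$ is a reflective subcategory; unwinding orthogonality with the Yoneda lemma, a presheaf~$P$ lies in~$\Sigma^\perp$ precisely when each $P(\mathrm{colim}_j\,D_j)\to\lim_j P(D_j)$ is a bijection, \ie when $P\colon\linear^\op\to\Set$ carries the finite limits that exist in~$\linear^\op$ to limits in~$\Set$. This reflective subcategory --- call it~$\concurrent'$, with reflector~$L$ --- is the one named in the theorem. Representables already belong to~$\concurrent'$ (a hom-functor turns colimits into limits), so~$y$ corestricts to $\linear\to\concurrent'$; being reflective in a cocomplete category, $\concurrent'$ is cocomplete with colimits computed by applying~$L$ to colimits of~$\hat\linear$, and since~$L$ inverts~$\Sigma$ the corestricted~$y$ preserves the finite colimits that exist in~$\linear$.

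For the universal property of~$\concurrent'$: if~$\C$ is cocomplete and~$F\colon\linear\to\C$ preserves the finite colimits of~$\linear$, then the left Kan extension $\hat\linear\to\C$ inverts every map of~$\Sigma$ (a cocontinuous functor sends~$\alpha_D$ to the comparison map for~$F$ at that colimit, which is invertible exactly because~$F$ preserves it), hence inverts every $L$-equivalence --- the $L$-equivalences being the saturation of~$\Sigma$ under the operations that every cocontinuous functor respects --- and therefore factors uniquely, up to isomorphism, through~$L$ by a cocontinuous functor; uniqueness follows because every object of~$\concurrent'$ is a colimit of representables (apply~$L$ to the canonical such presentation valid in~$\hat\linear$). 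This establishes the first assertion. For the second, I would let~$\concurrent$ be the closure of the representables under finite colimits \emph{taken inside~$\concurrent'$}, equivalently the full subcategory of presheaves isomorphic to a finite colimit of representables --- the promised further restriction. It contains the representables and is closed under finite colimits, so it is finitely cocomplete and $\concurrent\hookrightarrow\concurrent'$, hence also $y\colon\linear\to\concurrent$, preserves finite colimits, and each of its objects is by construction a finite colimit of representables.

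The one genuine obstacle is the universal property of~$\concurrent$ against a target~$\C$ that is only assumed \emph{finitely} cocomplete, where the shortcut through~$\hat\linear$ is unavailable. I would bypass it through the ind-completion $k\colon\C\to\mathrm{Ind}(\C)$: since~$\C$ has finite colimits, $\mathrm{Ind}(\C)$ is cocomplete, $k$ is fully faithful and preserves finite colimits, and $k(\C)$ is closed under finite colimits in $\mathrm{Ind}(\C)$. Feeding $k\circ F$ into the first part yields a cocontinuous $G\colon\concurrent'\to\mathrm{Ind}(\C)$ with $G\circ y=k\circ F$; its restriction to~$\concurrent$ preserves finite colimits and sends representables into~$k(\C)$, so it takes values in~$k(\C)\simeq\C$, and full faithfulness of~$k$ transports it to the required finite-colimit-preserving functor $\tilde F\colon\concurrent\to\C$ with $\tilde F\circ y=F$. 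Uniqueness holds because two finite-colimit-preserving functors out of~$\concurrent$ that agree on the representables agree on all of~$\concurrent$, the latter being generated by representables under finite colimits. Finally, after noting that all these universal properties must be read up to coherent natural isomorphism, the uniqueness clause of Definition~\ref{def:free-cocompl} identifies this explicit~$\concurrent$ with the abstractly defined one.
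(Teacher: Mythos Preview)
The paper does not prove this theorem: it is stated as a folklore result ``often attributed to Kelly'' with references to Kelly and Ad\'amek--Rosick\'y, and is then used as a black box throughout Section~\ref{sec:cocompl}. So there is no paper proof to compare against; you have supplied an argument where the authors chose to cite one.

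Your argument is the standard one and is essentially correct. The identification of the conservative cocompletion with the orthogonality class $\Sigma^\perp$ in $\hat\linear$, its reflectivity via local presentability, and the unwinding of orthogonality to the limit-preservation condition are all sound. The one step that is compressed is the passage from ``$\hat F$ inverts $\Sigma$'' to ``$\hat F$ inverts the reflection units $\eta_X\colon X\to LX$'': you say the $L$-equivalences are the saturation of $\Sigma$ under operations every cocontinuous functor respects, which is true but deserves a word --- the usual justification is that in a locally presentable category the reflector onto $\Sigma^\perp$ is built by the small object argument, so each $\eta_X$ is a transfinite composite of pushouts of maps in $\Sigma$, and a cocontinuous functor inverting $\Sigma$ therefore inverts every $\eta_X$. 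With that made explicit, the restriction $\hat F|_{\concurrent'}$ is genuinely cocontinuous and the first universal property follows.

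Your handling of the finite case via $\mathrm{Ind}(\C)$ is clean and correct; the key facts you invoke (that $\mathrm{Ind}(\C)$ is cocomplete when $\C$ has finite colimits, that $k$ is fully faithful and finite-colimit-preserving, and that $k(\C)$ is closed under finite colimits) are exactly what is needed. One small remark: the theorem's phrase ``finite colimits of representables'' is ambiguous between colimits taken in $\hat\linear$ and colimits taken in $\concurrent'$; your reading (closure under finite colimits \emph{inside} $\concurrent'$) is the one that makes the statement true, and it would be worth flagging that this is the intended interpretation.
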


\begin{example}
  \label{ex:one-cocompl}
  The category $\FSet$ of finite sets and functions is the conservative
  cocompletion of the terminal category~$\one$.
\end{example}

\noindent
We recall that the category $\hat\linear$ of \emph{presheaves} over~$\linear$,
is the category of functors $\linear^\op\to\Set$ and natural transformations
between them.
%
The \emph{Yoneda functor} $y:\linear\to\hat\linear$ defined on objects
$n\in\linear$ by $yn=\linear(-,n)$, and on morphisms by postcomposition,
provides a full and faithful embedding of $\linear$ into the corresponding
presheaf category, and can be shown to corestrict into a functor
$y:\linear\to\concurrent$~\cite{adamek1994locally}. A presheaf of the form $yn$
for some $n\in\linear$ is called \emph{representable}.

Extracting a concrete description of the category~$\concurrent$ from the above
proposition is a challenging task, because we a priori need to characterize
firstly all diagrams admitting a colimit in~$\linear$, and secondly all
presheaves in $\hat\linear$ which preserve those diagrams. This paper introduces
a general methodology to build such a category. In particular, perhaps a bit
surprisingly, it turns out that we have to ``allow cycles'' in the objects of
the category~$\concurrent$, which will be described as \emph{the category whose objects are
  finite sets labeled by lines together with a transitive relation and  morphisms
  are partial functions respecting labels and relations}.

\section{A cocompletion of files and insertions of lines}
\label{sec:cocompl}
In order to make our presentation clearer, we shall begin our investigation of
the category~$\concurrent$ in a simpler case, which will be generalized in
Section~\ref{sec:extensions}: we compute the free finite cocompletion of the
category $\tlinear$ (patches can only insert lines) in the case where the set of
labels is a singleton. \emph{To further lighten notations, in this section, we
  simply write $\linear$ for this category}.

We sometimes characterize the objects in $\linear$ as finite
colimits of objects in a subcategory~$\graph$ of~$\linear$. This
category~$\graph$ is the full subcategory of $\linear$ whose objects are~$1$
and~$2$: it is the free category on the graph
$\xymatrix{1\ar@<+.5ex>[r]\ar@<-.5ex>[r]&2}$, the two arrows being $s^1_0$ and
$s^1_1$. The category $\hat\graph$ of presheaves over~$\graph$ is the category
of \emph{graphs}: a presheaf $P\in\hat\graph$ is a graph with $P(1)$ as
vertices, $P(2)$ as edges, the functions $P(s_1^1)$ and $P(s_0^1)$ associate to
a vertex its source and target respectively, and morphisms correspond to usual
morphisms of graphs. We denote by $x\twoheadrightarrow y$ a path going from a
vertex $x$ to a vertex $y$ in such a graph. The inclusion functor $I:\graph\to\linear$
induces, by precomposition, a functor $I^*:\hat\linear\to\hat\graph$. The image
of a presheaf in~$\hat\linear$ under this functor is called its \emph{underlying
  graph}. By well known results about presheaves categories, this functor admits a
right adjoint $I_*:\hat\graph\to\hat\linear$: given a graph $G\in\hat\graph$,
its image under the right adjoint is the presheaf $G_*\in\hat\linear$ such that
for every $n\in\N$, $G_*(n+1)$ is the set of paths of length $n$ in the
graph~$G$, with the expected source maps, and $G_*(0)$ is reduced to one
element.

Recall that every functor $F:\C\to\D$ induces a \emph{nerve functor}
$N_F:\D\to\hat\C$ defined on an object $A\in\C$ by
$N_F(A)=\D(F-,A)$~\cite{mac1992sheaves}.
Here, we will consider the nerve $N_I:\linear\to\hat\graph$ associated to the
inclusion functor $I:\graph\to\linear$. An easy computation shows that the image
$N_I(n)$ of $n\in\linear$ is a graph with $n$ vertices, so that its objects are
isomorphic to $[n]$, and there is an arrow $i\to j$ for every $i,j\in[n]$ such
that $i<j$. For instance,
\[
N_I(3)=\vxym{0\ar[r]\ar@/_/[rr]&1\ar[r]&2}
\qquad\qquad\qquad
N_I(4)=\vxym{0\ar[r]\ar@/^/[rr]\ar@/_/[rrr]&1\ar[r]\ar@/^/[rr]&2\ar[r]&3}
\]
It is, therefore, easy to check that this embedding is full and faithful, \ie
morphisms in $\linear$ correspond to natural transformations in~$\hat\graph$.
Moreover, since~$N_I(1)$ is the graph reduced to a vertex and $N_I(2)$ is the
graph reduced to two vertices and one arrow between them, every graph can be
obtained as a finite colimit of the graphs $N_I(1)$ and $N_I(2)$ by ``gluing
arrows along vertices''. For instance, the initial graph $N_I(0)$ is the colimit
of the empty diagram, and the graph $N_I(3)$ is the colimit of the diagram
\begin{equation*}
  \vcenter{
    \xymatrix@R=1ex@C=4ex{
      &N_I(2)&&N_I(2)\\
      N_I(1)\ar[drr]_{N_I(s_1)}\ar[ur]^{N_I(s_1)}&&\ar[ul]_{N_I(s_0)}N_I(1)\ar[ur]^{N_I(s_1)}&&\ar[ul]_{N_I(s_0)}\ar[dll]^{N_I(s_0)}N_I(1)\\
      &&N_I(2)\\
    }
  }
\end{equation*}
which may also be drawn as on the left of
\newcommand{\anode}{{\begin{tikzpicture}\ssnode{(0,0)};\end{tikzpicture}}}
\newcommand{\twonodes}{{\begin{tikzpicture}\ssnode{(0,0)};\ssnode{(.5,0)};\end{tikzpicture}}}
\newcommand{\anedge}{{\begin{tikzpicture}\ssnode{(0,0)};\ssnode{(.5,0)};\sstrans{(0,0)}{(.5,0)};\end{tikzpicture}}}
\newcommand{\anegde}{{\begin{tikzpicture}\ssnode{(0,0)};\ssnode{(.5,0)};\sstrans{(.5,0)}{(0,0)};\end{tikzpicture}}}
\begin{equation*}
  \vcenter{
    \xymatrix@R=2ex@C=2ex{
      &\anedge&&\anedge\\
      \anode\ar[drr]\ar[ur]&&\ar[ul]\anode\ar[ur]&&\ar[ul]\ar[dll]\anode\\
      &&\anedge\\
    }
  }
  \qquad\qquad\qquad\qquad
  \vcenter{
    \xymatrix@R=2ex@C=2ex{
      &2&&2\\
      1\ar[drr]\ar[ur]&&\ar[ul]1\ar[ur]&&\ar[ul]\ar[dll]1\\
      &&2\\
    }
  }
\end{equation*}
by drawing the graphs $N_I(0)$ and $N_I(1)$. Notice, that the object $3$ is the
colimit of the corresponding diagram in~$\linear$ (on the right), and this is
generally true for all objects of~$\linear$, moreover this diagram is described
by the functor $\El(N_I(3))\xrightarrow\pi\linear$. The notation $\El(P)$ refers
to the \emph{category of elements} of a presheaf $P\in\hat\C$, whose objects
are pairs $(A,p)$ with $A\in\C$ and $p\in P(A)$ and morphisms $f:(A,p)\to(B,q)$
are morphisms $f:A\to B$ in $\C$ such that $P(f)(q)=p$, and~$\pi$ is the first
projection functor. The functor $I:\graph\to\linear$ is thus a dense functor in
the sense of Definition~\ref{def:dense} below, see~\cite{mac1992sheaves} for
details.

\begin{proposition}
  \label{prop:realization}
  Given a functor $F:\C\to\D$, with $\D$ cocomplete, the associated
  nerve $N_F:\D\to\hat\C$ admits a left adjoint $R_F:\hat\C\to\D$
  called the \emph{realization} along~$F$. This functor is defined on
  objects~$P\in\hat\C$ by
  \[
  R_F(P)\qeq\colim(\El(P)\xrightarrow{\pi}\C\xrightarrow{F}\D)
  \]
\end{proposition}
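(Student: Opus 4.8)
The statement to prove is Proposition~\ref{prop:realization}: for a functor $F:\C\to\D$ with $\D$ cocomplete, the nerve $N_F:\D\to\hat\C$ has a left adjoint $R_F$ given on objects by $R_F(P)=\colim(\El(P)\xrightarrow{\pi}\C\xrightarrow{F}\D)$.

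The plan is to exhibit the adjunction directly by constructing a natural bijection $\D(R_F(P),D)\cong\hat\C(P,N_F(D))$ for $P\in\hat\C$ and $D\in\D$. First I would note that the colimit defining $R_F(P)$ exists because $\D$ is cocomplete and $\El(P)$ is a small category (it is small whenever $\C$ is small, which is the standing assumption). Then, unfolding the universal property of the colimit, a morphism $R_F(P)\to D$ in $\D$ is precisely a cocone from the diagram $F\circ\pi:\El(P)\to\D$ to $D$, i.e.\ a family of morphisms $\bigl(\theta_{(A,p)}:FA\to D\bigr)_{(A,p)\in\El(P)}$ that is natural in $(A,p)$: for every morphism $f:(A,p)\to(B,q)$ in $\El(P)$ — that is, $f:A\to B$ in $\C$ with $P(f)(q)=p$ — one has $\theta_{(B,q)}\circ Ff=\theta_{(A,p)}$.

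Next I would show such cocones correspond bijectively to natural transformations $\alpha:P\to N_F(D)$. Given a cocone $\theta$, for each $A\in\C$ define $\alpha_A:P(A)\to N_F(D)(A)=\D(FA,D)$ by $\alpha_A(p)=\theta_{(A,p)}$; the cocone condition applied to morphisms of $\El(P)$ of the shape $f:(A,P(f)(q))\to(B,q)$ is exactly the naturality square for $\alpha$ with respect to $f:A\to B$ in $\C^{\op}$. Conversely, given $\alpha:P\to N_F(D)$, set $\theta_{(A,p)}=\alpha_A(p):FA\to D$; naturality of $\alpha$ yields the cocone condition. These two assignments are visibly mutually inverse, so we get the required bijection, and its naturality in $D$ (and in $P$) is a routine check since both sides are built from composition in $\D$ and whiskering. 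This establishes $R_F\dashv N_F$; defining $R_F$ on morphisms by functoriality of the colimit (a morphism $P\to P'$ induces a functor $\El(P)\to\El(P')$ over $\C$, hence a morphism of colimiting cocones) makes the whole thing a genuine adjunction.

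I do not expect a serious obstacle here — the result is the standard ``nerve–realization'' (or Kan extension) adjunction, and the only thing to be careful about is bookkeeping: matching the direction of morphisms in $\El(P)$ (which lies over $\C$, not $\C^{\op}$) against the contravariance of the presheaf $P$, so that the cocone naturality condition lines up with presheaf naturality rather than its opposite. An alternative, slicker route would be to invoke the fact that $R_F$ is the left Kan extension $\Lan_y F$ of $F$ along the Yoneda embedding $y:\C\to\hat\C$, together with the standard colimit formula $\Lan_y F(P)=\colim(\El(P)\to\C\xrightarrow{F}\D)$ and the general fact that left Kan extension along $y$ is left adjoint to $\D(F-,=)=N_F$; but since the paper is aiming to be reasonably self-contained, I would present the direct cocone argument above.
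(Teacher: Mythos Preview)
Your proposal is correct and follows essentially the same route as the paper: the paper's proof (spelled out in the appendix) also unfolds a natural transformation $P\to N_FD$ as a family $(\theta_C(p):FC\to D)_{(C,p)\in\El(P)}$, identifies this with a cocone under $F\circ\pi$, and then invokes the universal property of the colimit defining $R_F(P)$. Your write-up is in fact slightly more careful (you mention naturality in $P$ and $D$ and the action of $R_F$ on morphisms), but the argument is the same.
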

\begin{proof}
  Given a presheaf~$P\in\hat\C$ and an object $D$, it can be checked directly
  that morphisms $P\to N_FD$ in~$\hat\C$ with cocones from
  $\El(P)\xrightarrow\D$ to $D$, which in turn are in bijection with morphisms
  $R_F(P)\to D$ in~$\D$, see~\cite{mac1992sheaves}.
\end{proof}

\begin{definition}
  \label{def:dense}
  A functor $F:\C\to\D$ is \emph{dense} if it satisfies one of the two
  equivalent conditions:
  \begin{enumerate}
  \item[(i)] the associated nerve functor $N_F:\D\to\hat\C$ is full
    and faithful,
  \item[(ii)] every object of $\D$ is canonically a colimit of objects in~$\C$:
    for every $D\in\D$,
    \begin{equation}
      \label{eq:dense-repr}
      D\qcong\colim(\El(N_FD)\xrightarrow{\pi}\C\xrightarrow{F}\D)
    \end{equation}
  \end{enumerate}
\end{definition}

\noindent




Since the functor $I$ is dense, every object of~$\linear$ is a finite colimit of
objects in $\graph$, and $\graph$ does not have any non-trivial colimit. One
could expect the free conservative finite cocompletion of $\linear$ to be the
free finite cocompletion $\concurrent$ of~$\graph$. We will see that this is not
the case because the image in $\linear$ of a non-trivial diagram in $\graph$
might still have a colimit. By Theorem~\ref{thm:cons-cocomp}, the
category~$\concurrent$ is the full subcategory of~$\hat\linear$ of presheaves
preserving limits, which we now describe explicitly. This category will turn out
to be equivalent to a full subcategory of~$\hat\G$ (Theorem~\ref{thm:fcfc}). We
should first remark that those presheaves satisfy the following properties:

\begin{proposition}
  \label{prop:cons-cocompl}
  Given a presheaf $P\in\hat\linear$ which is an object of $\concurrent$,
  \begin{enumerate}
  \item the underlying graph of $P$ is finite,
  \item for each non-empty path $x\twoheadrightarrow y$ there exists exactly one
    edge $x\to y$ (in particular there is at most one edge between two
    vertices),
  \item $P(n+1)$ is the set of paths of length~$n$ in the underlying graph
    of~$P$, and~$P(0)$ is reduced to one element.
  \end{enumerate}
\end{proposition}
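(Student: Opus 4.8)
The plan is to exploit the two features of such a $P$ provided by Theorem~\ref{thm:cons-cocomp}: as an object of $\concurrent$ it sends the finite colimits that happen to exist in $\linear$ to limits in $\Set$, and it is a finite colimit of representables in $\hat\linear$. The underlying-graph functor is the restriction $I^*\colon\hat\linear\to\hat\graph$ along the dense inclusion $I\colon\graph\to\linear$, and as a restriction functor between presheaf categories it has a right adjoint (denoted $I_*$ above), hence preserves colimits. Writing $P$ as a finite colimit $\colim_k yn_k$ of representables, we get $I^*P\cong\colim_k N_I(n_k)$; each $N_I(n_k)$ is a finite graph, colimits of graphs are computed pointwise, and a finite colimit of finite sets is finite, so the underlying graph of $P$ is finite, which is~(1). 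For the last clause of~(3), note that the empty file $0$ is the initial object of $\linear$, i.e.\ the colimit of the empty diagram; since $P$ preserves it, $P(0)$ is the limit of the empty diagram in $\Set$, a one-element set.

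The core of the proof is to describe every $n\in\linear$ as a finite colimit in two ways and compare. First, density of $I$ (Definition~\ref{def:dense}) gives $n\cong\colim(\El(N_In)\xrightarrow{\pi}\graph\xrightarrow{I}\linear)$; since $P$ turns this colimit into a limit, an element of $P(n)$ is precisely a \emph{compatible labelling} of the transitive tournament $N_In$ by $P$: vertices $v_0,\dots,v_{n-1}\in P(1)$ together with, for each $i<j$, an edge $e_{ij}\in P(2)$ from $v_i$ to $v_j$. Second, exactly as the text checks that $3\cong 2\sqcup_1 2$, and then by iterating, one shows that for $n\ge 1$ the object $n$ is the $(n-1)$-fold pushout $2\sqcup_1 2\sqcup_1\cdots\sqcup_1 2$, each step gluing the target of one copy of $2$ to the source of the next; applying $P$ turns this into an iterated pullback over $P(1)$, so $P(n)$ is identified with the set of composable chains $v_0\to\cdots\to v_{n-1}$ of edges, that is, with the paths of length $n-1$. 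Together with $P(0)=\{\ast\}$ this is~(3).

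It remains to derive~(2), for which I would compare the two computations of $P(n)$: the canonical bijection between them carries a compatible labelling to its \emph{spine} $v_0\to v_1\to\cdots\to v_{n-1}$ of consecutive edges, so the spine determines the whole labelling, and in particular every \emph{long} edge $e_{ij}$ with $j>i+1$ — a priori an arbitrary edge from $v_i$ to $v_j$ — is forced to equal the unique edge determined by the path $v_i\twoheadrightarrow v_j$. Taking $n$ one more than the length of a given path of length $\ge 2$ then gives exactly one connecting edge for it. The point I expect to be the main obstacle is the remaining case of~(2), that there is at most one edge between any two vertices: the two presentations above degenerate for $n\le 2$, and preservation of finite limits by itself does not seem to forbid parallel edges, so one must genuinely use that $P$ is a \emph{finite} colimit of representables — for instance by unwinding $P\cong\colim_k yn_k$ at levels $P(1)$ and $P(2)$ and tracking how the edges of the underlying graph are assembled from the edges of the tournaments $N_I(n_k)$ — together with the explicit description of the pushouts that exist in $\linear$, which is also what the iterated-pushout step above needs.
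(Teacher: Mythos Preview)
Your treatment of~(1), (3), and $P(0)$ is correct and essentially matches the paper; your argument for~(1) via the left adjoint $I^*$ is in fact cleaner than the paper's one-line justification. Comparing your two presentations of $P(n)$ to extract a unique ``long'' edge over each path of length $\geq 2$ is also correct and amounts to the paper's use of diagram~\eqref{eq:pres-trans} for $n>1$.

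The genuine gap is in your last paragraph, and your diagnosis of it is wrong. You write that ``preservation of finite limits by itself does not seem to forbid parallel edges''; it does, and this is precisely how the paper closes the case. Consider the diagram in $\linear$ generated by the graph with two vertices and \emph{two parallel edges}, i.e.\ the degenerate instance of~\eqref{eq:pres-trans} with two copies of~$2$ glued at both endpoints. Its colimit in $\linear$ is~$2$: a morphism $2\to m$ in $\linear$ is an increasing injection and hence determined by its two values, so the two cocone legs are forced to agree. Since $P$ sends this colimit to a limit, one gets $P(2)\cong P(2)\times_{P(1)\times P(1)}P(2)$, i.e.\ the diagonal is a bijection, hence at most one edge between any two vertices. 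The colimit diagrams you restricted yourself to were all of the form $\El(N_In)$, but $\linear$ has finite colimits coming from $\El(G)$ for \emph{non-representable} finite graphs $G$ as well, and the two-parallel-edges graph is exactly the one needed here. Your proposed detour through the finite-colimit-of-representables structure would not rescue the argument on its own: colimits in $\hat\graph$ readily create parallel edges even when none of the pieces $N_I(n_k)$ has any.
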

\begin{proof}
  We suppose given a presheaf $P\in\concurrent$, it preserves limits by
  Theorem~\ref{thm:cons-cocomp}. The diagram on the left
  \[
  \vcenter{
    \xymatrix@R=1ex@C=2ex{
      &3&\\
      2\ar[ur]^{s^2_2}&&\ar[ul]_{s^2_0}2\\
      &\ar[ul]^{s^1_0}1\ar[ur]_{s^1_1}\\
    }
  }
  \qquad\qquad\qquad
  \vcenter{
    \xymatrix@R=1ex@C=2ex{
      &P(3)&\\
      P(2)\ar@{<-}[ur]^{P(s^2_2)}&&\ar@{<-}[ul]_{P(s^2_0)}P(2)\\
      &\ar@{<-}[ul]^{P(s^1_0)}P(1)\ar@{<-}[ur]_{P(s^1_1)}\\
    }
  }
  \]
  is a pushout in~$\linear$, or equivalently the dual diagram is a pullback in
  $\linear^\op$. Therefore, writing $D$ for the diagram
  $\vxym{2&\ar[l]_{s^1_0}1\ar[r]^{s^1_1}&2}$ in $\linear$, a presheaf
  $P\in\concurrent$ should satisfy $P((\colim D)^\op)\cong\lim P(D^\op)$, \ie
  the above pushout diagram in~$\linear$ should be transported by $P$ into the
  pullback diagram in~$\Set$ depicted on the right of the above figure. This
  condition can be summarized by saying that $P$ should satisfy the isomorphism
  $P(3)\cong P(2)\times_{P(1)}P(2)$ (and this isomorphism should respect obvious
  source and target maps given by the fact that the functor $P$ should send
   a limiting cone to a limiting cone). From this fact, one can
   deduce that the elements $\alpha$ of $P(3)$ are in bijection with the
  paths $x\to y\to z$ of length~$2$ in the underlying graph of $P$ going from
  $x=P(s^2_2s^1_1)(\alpha)$ to $z=P(s^2_0s^1_0)(\alpha)$.
  In particular, this implies that for any path $\alpha=x\to y\to z$ of length
  $2$ in the underlying graph of $P$, there exists an edge $x\to z$, which is
  $P(s^2_1)(\alpha)$. More generally, given any integer $n>1$, the object $n+1$
  is the colimit in $\linear$ of the diagram
  \begin{equation}
    \label{eq:pres-lin}
    \xymatrix@R=1ex@C=2.5ex{
      &2&&2&&&&2&&2\\
      1\ar[ur]^{s^1_1}&&\ar[ul]_{s^1_0}1\ar[ur]^{s^1_1}&&\ar[ul]_{s^1_0}&\ldots&\ar[ur]^{s^1_1}&&\ar[ul]_{s^1_0}1\ar[ur]^{s^1_1}&&\ar[ul]_{s^1_0}1\\
    }
  \end{equation}
  with $n+1$ occurrences of the object $1$, and $n$ occurrences of the object
  $2$. Therefore, for every $n\in\N$, $P(n+1)$ is isomorphic to the set of paths
  of length~$n$ in the underlying graph. Moreover, since the diagram
  \begin{equation}
    \label{eq:pres-trans}
    \xymatrix@R=1ex@C=2.3ex{
      &2&&2&&&&2&&2\\
      1\ar[drrrrr]_{s^1_1}\ar[ur]^{s^1_1}&&\ar[ul]_{s^1_0}1\ar[ur]^{s^1_1}&&\ar[ul]_{s^1_0}&\ldots&\ar[ur]^{s^1_1}&&\ar[ul]_{s^1_0}1\ar[ur]^{s^1_1}&&\ar[ul]_{s^1_0}\ar[dlllll]^{s^1_0}1\\
      &&&&&2\\
    }
  \end{equation}
  with $n+1$ occurrences of the object $1$ also admits the object $n+1$ as
  colimit, we should have
  $P(n+1)\cong P(n+1)\times P(2)$ between any two vertices~$x$ and~$y$, \ie for
  every non-empty path $x\twoheadrightarrow y$ there exists exactly one edge
  $x\to y$.  Also, since the object $0$ is initial in~$\linear$, it is the
  colimit of the empty diagram. The set~$P(0)$ should thus be the terminal set,
  \ie reduced to one element.
  %
  Finally, since~$I$ is dense, $P$ should be a finite colimit of the
  representables $N_I(1)$ and $N_I(2)$, the set $P(1)$ is necessarily finite, as
  well as the set $P(2)$ since there is at most one edge between two vertices.
\end{proof}

Conversely, we wish to show that the conditions mentioned in the above
proposition exactly characterize the presheaves in $\concurrent$ among those in
$\hat\linear$. In order to prove so, by Theorem~\ref{thm:cons-cocomp}, we
have to show that presheaves~$P$ satisfying these conditions preserve finite
limits in~$\linear$, \ie that for every finite diagram $D:\J\to\linear$
admitting a colimit we have $P(\colim D)\cong\lim(P\circ D^\op)$. It seems quite
difficult to characterize the diagrams admitting a colimit in~$\linear$, however
the following lemma shows that it is enough to check diagrams ``generated'' by a
graph which admits a colimit.

\begin{lemma}
  \label{lemma:pres-graph}
  A presheaf $P\in\hat\linear$ preserves finite limits if and only if it sends
  the colimits of diagrams of the form
  \begin{equation}
    \label{eq:diag-graph}
    \El(G)\xrightarrow{\pi_G}\graph\xrightarrow{I}\linear
  \end{equation}
  to limits in~$\Set$, where $G\in\hat\graph$ is a finite graph such that the
  above diagram admits a colimit. Such a diagram in~$\linear$ is said to be
  \emph{generated by the graph~$G$}.
\end{lemma}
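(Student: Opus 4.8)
The plan is to prove the two implications separately. The ``only if'' direction is immediate: a diagram generated by a finite graph is in particular a finite diagram in $\linear$, so a presheaf that preserves all finite limits sends such a colimit, when it exists, to a limit. For the ``if'' direction I would assume that $P$ sends the colimit of every diagram generated by a finite graph to a limit in $\Set$, fix an arbitrary finite diagram $D:\J\to\linear$ with colimit $C$, and show $P(C)\cong\lim(P\circ D^\op)$ by applying the hypothesis twice: once to each object $D(j)$ and once to a suitable graph assembling them.

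First I would record a routine reformulation. For a finite graph $G$, unwinding the category of elements $\El(G)$ — its objects are the vertices (sent by $\pi_G$ to $1$) and edges (sent to $2$) of $G$, and its non-identity arrows record, via $s^1_0$ and $s^1_1$, the two endpoints of each edge — one sees that a cone over $P\circ\pi_G^\op$ is exactly a morphism of graphs $G\to I^*P$ into the underlying graph of $P$, so that $\lim(P\circ\pi_G^\op)\cong\hat\graph(G,I^*P)$ naturally in $G$ and $P$; taking $P$ representable, the same computation shows that cocones from $\El(G)\xrightarrow{\pi_G}\graph\xrightarrow{I}\linear$ to an object $X$ correspond to morphisms $G\to N_IX$ in $\hat\graph$ (this is the bijection behind Proposition~\ref{prop:realization}, which needs no cocompleteness of $\linear$). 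Next, density of $I$ enters: for each $j\in\J$ the graph $N_ID(j)$ is finite and the diagram it generates has colimit $D(j)$ by condition~(ii) of Definition~\ref{def:dense}, so the hypothesis applied to this graph yields $P(D(j))\cong\hat\graph(N_ID(j),I^*P)$.

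I would then set $G:=\colim_{j\in\J}N_ID(j)$, a finite colimit of finite graphs in $\hat\graph$ and hence a finite graph, and check that the diagram generated by $G$ again has colimit $C$. This I would do purely on cocones: for every $X$ there are isomorphisms
\[
\linear(C,X)\ \cong\ \lim_{j\in\J}\linear(D(j),X)\ \cong\ \lim_{j\in\J}\hat\graph(N_ID(j),N_IX)\ \cong\ \hat\graph(G,N_IX),
\]
natural in $X$, coming from the universal property of $C=\colim D$, full faithfulness of $N_I$, and $\hat\graph(-,N_IX)$ turning the colimit defining $G$ into a limit; by the reformulation the right-hand side is the set of cocones from $\pi_G$ to $X$, so $C\cong\colim\pi_G$. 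Applying the hypothesis to $G$ and combining it with the reformulation, with the computation of $P(D(j))$, and again with $\hat\graph(-,I^*P)$ sending the colimit defining $G$ to a limit gives
\[
P(C)\ \cong\ \lim(P\circ\pi_G^\op)\ \cong\ \hat\graph(G,I^*P)\ \cong\ \lim_{j\in\J}\hat\graph(N_ID(j),I^*P)\ \cong\ \lim_{j\in\J}P(D(j))\ =\ \lim(P\circ D^\op),
\]
and tracing the maps shows the composite is the canonical comparison, so $P$ preserves the colimit of $D$; since $D$ was arbitrary, $P$ preserves finite limits.

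The step I expect to be the main obstacle is showing that $\El(G)\xrightarrow{\pi_G}\graph\xrightarrow{I}\linear$ has colimit $C$: because $\linear$ is not cocomplete one cannot simply invoke cocontinuity of a realization functor, so the argument must be carried out by hand on cocones, and it is exactly there — in replacing $\lim_j\linear(D(j),X)$ by $\lim_j\hat\graph(N_ID(j),N_IX)$ — that density of $I$, equivalently full faithfulness of $N_I$, does the essential work. A secondary point of care is checking that all the isomorphisms are natural and compose to the canonical comparison map, so that $P$ genuinely preserves the colimit rather than merely admitting an abstract bijection.
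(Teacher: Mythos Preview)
Your proof is correct and takes a genuinely different route from the paper's. The paper argues combinatorially: given an arbitrary finite diagram $K:\J\to\linear$, it uses density of $I$ to replace each object by its canonical decomposition into copies of $1$ and $2$, and then manipulates the resulting diagram by finality arguments (identifying two arrows $s_0^1$ with the same target, adding a missing endpoint to an object $2$, etc.) until the diagram is literally of the form $\El(G)\xrightarrow{\pi}\graph\xrightarrow{I}\linear$ for some finite graph $G$; the abstract version of this step invokes the comprehensive factorization system on $\Cat$. You instead leave $D$ untouched and work entirely inside $\hat\graph$: you set $G=\colim_{j}N_ID(j)$ and use full faithfulness of $N_I$ together with the co-Yoneda identification $\lim(P\circ\pi_G^\op)\cong\hat\graph(G,I^*P)$ to transfer the colimit and limit comparisons across. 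Your argument is slicker, avoids the case analysis on arrow labels, and makes the role of density (condition~(i) versus condition~(ii) of Definition~\ref{def:dense}) more transparent; the paper's argument, on the other hand, yields a concrete description of which graph is attached to a given diagram and ties into the comprehensive-factorization machinery that the authors reuse later for the category with deletions. The naturality check you flag as ``secondary'' is genuine---one must verify that the isomorphisms $P(D(j))\cong\hat\graph(N_ID(j),I^*P)$ assemble into a natural isomorphism of $\J^\op$-diagrams so that the composite is the canonical comparison---but it is routine once one observes that both sides arise from the colimiting cocones $I\pi_{N_ID(j)}\Rightarrow D(j)$ and that these are compatible with the cocone legs $N_ID(j)\to G$.
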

\begin{proof}
  In order to check that a presheaf $P\in\hat\linear$ preserves finite limits,
  we have to check that it sends colimits of finite diagrams in~$\linear$
  \emph{which admit a colimit} to limits in~$\Set$, and therefore we have to
  characterize diagrams which admit colimits in~$\linear$. Suppose given a
  diagram $K:\J\to\linear$. Since~$I$ is dense, every object of linear is a
  colimit of a diagram involving only the objects~$1$ and~$2$ (see
  Definition~\ref{def:dense}). We can therefore suppose that this is the case in the
  diagram~$K$. Finally, it can be shown that diagram~$K$ admits the same
  colimits as a diagram containing only~$s_0^1$ and $s_1^1$ as arrows (these are
  the only non-trivial arrows in $\linear$ whose source and target are $1$ or
  $2$), in which every object~$2$ is the target of exactly one arrow $s_0^1$ and
  one arrow $s_1^1$. For instance, the diagram in~$\linear$ below on the left
  admits the same colimits as the diagram in the middle.
  \[
  \vxym{
    2&&3\\
    &\ar[ul]^{s_0^1}1\ar[dr]_{s_1^1}\ar[ur]^>>{s_2^2 s_1^1}&&\ar[dl]^{s_0^1}\ar[ul]_{s_0^2 s_0^1}1&\ar@/^5mm/[dll]^<<<<{s_0^1}1\\
    &&2
  }
  \qquad\qquad
  \vxym{
    &2&&2&&2&\\
    1\ar[ur]^>{s_1^1}&&1\ar[ul]_>{s_0^1}\ar[ur]^>{s_1^1}\ar[drr]_{s_1^1}&&1\ar[ul]_>{s_0^1}\ar[ur]^>{s_1^1}&&1\ar[dll]^{s_0^1}\ar[ul]_>{s_0^1}\\
    &&&&2&\\
  }
  \qquad\qquad
  \vxym{
    0\ar[r]&1\ar[r]\ar@/_/[rr]&2\ar[r]&3
  }
  \]
  Any such diagram~$K$ is obtained by gluing a finite number of diagrams of the
  form $\vxym{1\ar[r]^{s_1^1}&2&1\ar[l]_{s_0^1}}$ along objects~$1$, and is
  therefore of the form $\El(G)\xrightarrow\pi\graph\xrightarrow{I}\linear$ for
  some finite graph $G\in\hat\graph$: the objects of $G$ are the objects $1$ in
  $K$, the edges of $G$ are the objects $2$ in $K$ and the source and target of
  an edge~$2$ are respectively given by the sources of the corresponding arrows
  $s_1^1$ and $s_0^1$ admitting it as target. For instance, the diagram in the
  middle above is generated by the graph on the right.
  The fact that every diagram is generated by a presheaf (is a discrete
  fibration) also follows more abstractly and generally from the construction of
  the comprehensive factorization system
  on~$\Cat$~\cite{pare1973connected,street1973comprehensive}.
\end{proof}

Among diagrams generated by graphs, those admitting a colimit can be
characterized using the following proposition:

\begin{lemma}
  \label{lemma:diag-colim}
  Given a graph~$G\in\hat\graph$, the associated diagram~\eqref{eq:diag-graph}
  admits a colimit in~$\linear$ if and only if there exists $n\in\linear$ and a morphism
  $f:G\to N_In$ in $\hat\linear$ such that every morphism $g:G\to N_Im$ in
  $\hat\linear$, with $m\in\linear$, factorizes uniquely through~$N_In$:
  $
  \vxym{
    G\ar[r]^f\ar@/_/[rr]_g&N_In\ar@{.>}[r]&N_Im
  }
  $
\end{lemma}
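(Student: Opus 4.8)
The plan is to reformulate the existence of the colimit of the diagram~\eqref{eq:diag-graph} purely in terms of the nerve $N_I$, exploiting that $N_I$ is full and faithful because $I$ is dense (Definition~\ref{def:dense}). The statement is then essentially the assertion that the colimit of~\eqref{eq:diag-graph} exists in~$\linear$ if and only if the (partial) left adjoint of $N_I$ is defined at~$G$, and we just have to match up the two universal properties.

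First I would recall the computation already used in the proof of Proposition~\ref{prop:realization}, specialised to $\C=\graph$, $F=I$, $\D=\linear$ and $P=G$: for every object $D\in\linear$, cocones from~\eqref{eq:diag-graph} with apex $D$ are in bijection with morphisms $G\to N_ID$ in $\hat\graph$, and this bijection is natural in~$D$. Note that only this single instance of the bijection is needed, so the fact that $\linear$ is not cocomplete is irrelevant here. Naturality says that precomposing a cocone with apex~$n$ by a morphism $h:n\to m$ of $\linear$ corresponds, on the presheaf side, to postcomposing the associated morphism $G\to N_In$ with $N_I(h):N_In\to N_Im$.

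Next I would spell out what it means for~\eqref{eq:diag-graph} to admit a colimit in~$\linear$: there is an object $n\in\linear$ together with a cocone with apex $n$ through which every cocone factorizes uniquely by a morphism of~$\linear$. Transporting this along the bijection above turns it into the following: there are $n\in\linear$ and a morphism $f:G\to N_In$ in $\hat\graph$ such that for every $m\in\linear$ and every $g:G\to N_Im$ there is a unique morphism $h:n\to m$ in~$\linear$ with $N_I(h)\circ f=g$. Finally, since $N_I$ is full and faithful, the map $h\mapsto N_I(h)$ is a bijection $\linear(n,m)\cong\hat\graph(N_In,N_Im)$: faithfulness carries uniqueness over, and fullness ensures that any graph morphism $N_In\to N_Im$ factoring $g$ through $f$ is of the form $N_I(h)$. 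Hence the displayed condition of the lemma --- ``$g$ factorizes uniquely through $f$ by a morphism $N_In\to N_Im$'' --- is equivalent to the previous one, and since every step above is an equivalence, the lemma follows.

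The step I would be most careful about is this last translation: one must check that the uniqueness of the comparison morphism is transported in \emph{both} directions by $N_I$, which is exactly what full faithfulness guarantees, and it is the only place where density of $I$ is genuinely used. Apart from that there is no real obstacle; the whole argument is a routine unwinding of the universal property of a colimit along a fully faithful nerve functor.
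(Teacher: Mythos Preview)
Your proof is correct and follows essentially the same approach as the paper's: both invoke the bijection between cocones over the diagram and morphisms $G\to N_In$ established in the proof of Proposition~\ref{prop:realization}, transport the universal property of the colimit along this (natural) bijection, and then use full faithfulness of $N_I$ to identify morphisms $n\to m$ in $\linear$ with morphisms $N_In\to N_Im$. Your version spells out more carefully why both fullness and faithfulness are needed to carry the unique factorization across, but the logical content is the same as the paper's appendix proof.
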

\begin{proof}
  Follows from the existence of a partially defined left adjoint to $N_I$, in
  the sense of~\cite{pare1973connected}, given by the fact that $I$ is dense
  (see Definition~\ref{def:dense}).
\end{proof}

\noindent
We finally arrive at the following concrete characterization of diagrams admitting
colimits:

\begin{lemma}
  \label{lemma:graphs-with-colim}
  A finite graph~$G\in\hat\graph$ induces a diagram~\eqref{eq:diag-graph}
  in~$\linear$ which admits a colimit if and only if it is ``tree-shaped'', \ie
  it is
  \begin{enumerate}
  \item acyclic: for any vertex $x$, the only path $x\twoheadrightarrow x$ is
    the empty path,
  \item connected: for any pair of vertices $x$ and $y$ there exists a path
    $x\twoheadrightarrow y$ or a path $y\twoheadrightarrow x$.
  \end{enumerate}
\end{lemma}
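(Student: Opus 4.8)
The plan is to reduce everything to the characterization of Lemma~\ref{lemma:diag-colim}: the diagram~\eqref{eq:diag-graph} generated by $G$ has a colimit in $\linear$ precisely when there is a morphism $f:G\to N_In$ in $\hat\graph$ through which every $g:G\to N_Im$ factors uniquely. First I would record two preliminary translations. (i) Since $N_Im$ is the graph on $[m]$ with an edge $i\to j$ exactly when $i<j$, a morphism $G\to N_Im$ in $\hat\graph$ is the same datum as a function $\phi$ from the vertices of $G$ to $[m]$ with $\phi(x)<\phi(y)$ whenever there is an edge $x\to y$, hence --- by transitivity of $<$ --- whenever there is a non-empty path $x\twoheadrightarrow y$; moreover such a morphism is determined by $\phi$ (there is at most one edge between two vertices of $N_Im$), and, since $N_I$ is full and faithful, the morphisms $N_In\to N_Im$ in $\hat\graph$ are exactly the $N_I(\theta)$ for $\theta:n\to m$ in $\linear$. (ii) If $G$ is acyclic then the reachability relation ``$x\twoheadrightarrow y$'' is a partial order on the finite vertex set (reflexive and transitive always, antisymmetric by acyclicity), and if $G$ is moreover connected in the stated sense this order is \emph{total}, so the vertices form a chain $x_0,\dots,x_{k-1}$ where $k$ is the number of vertices.

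For the ``if'' direction, assume $G$ tree-shaped and take $f:G\to N_Ik$ sending $x_i$ to $i$; this is a well-defined graph morphism because edges go strictly upward in the chain. Given any $g:G\to N_Im$ with vertex component $\phi$, strictness along edges together with $x_i\twoheadrightarrow x_j$ for $i<j$ forces $\phi(x_0)<\dots<\phi(x_{k-1})$, so $\phi$ factors as $\phi=\theta\circ f_V$ for the unique total increasing injection $\theta:[k]\to[m]$, $i\mapsto\phi(x_i)$; as $\theta$ is a morphism of $\linear$ and $f_V$ is surjective, $g=N_I(\theta)\circ f$ is the unique factorization of $g$ through $f$. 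By Lemma~\ref{lemma:diag-colim} the colimit exists.

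For the ``only if'' direction I would start from a universal $f:G\to N_In$ and prove that its vertex component $f_V$ is a bijection onto $[n]$ which is a linear extension of the reachability order; tree-shapedness then follows. Acyclicity is automatic, since a non-empty path $x\twoheadrightarrow x$ would give $f_V(x)<f_V(x)$. Surjectivity of $f_V$: if some $j\in[n]$ were not hit, $f$ would factor as $f=N_I(\iota)\circ f'$ with $\iota:n-1\to n$ the increasing injection omitting $j$ and $f':G\to N_I(n-1)$ obtained by corestriction; feeding $f'$ into the universal property gives $\theta$ with $f'=N_I(\theta)\circ f$, hence $f=N_I(\iota\theta)\circ f$, and the \emph{uniqueness} clause (comparing with $\id$) forces $\iota\theta=\id_n$, impossible since $\iota$ is not surjective. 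Injectivity of $f_V$: a topological sort of the finite acyclic $G$ yields a morphism $g:G\to N_Ik$ whose vertex component is injective, and factoring $g$ through $f$ exhibits $f_V$ as injective. Thus $f_V$ is a bijection, $n=k$, and $f_V$ is a linear extension of the reachability order. Finally, if reachability were not total there would be incomparable vertices $x\neq y$, and some linear extension $\phi$ of reachability orders them oppositely to $f_V$; the induced $g:G\to N_In$ factors as $g=N_I(\theta)\circ f$ with $\theta:n\to n$, and since both $f_V$ and $g_V=\phi$ are bijections onto $[n]$, $\theta$ is a total increasing bijection of $[n]$, hence $\id_{[n]}$, forcing $\phi=f_V$ --- contradicting the choice of $\phi$. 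So reachability is total and $G$ is connected.

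I expect the crux to be the ``only if'' direction, concretely the step showing that the vertex component of the universal morphism is a bijection: surjectivity is the one place where the uniqueness half of the universal property is genuinely used (existence of factorizations alone would not suffice), and one has to be careful about the direction of the comparison maps $\theta$ and about the distinction between total and partial increasing functions. The remaining ingredients are the order-theoretic observation~(ii) --- a finite acyclic connected graph has a chain as reachability order --- together with routine bookkeeping with the morphisms of $\linear$.
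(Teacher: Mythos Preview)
Your proof is correct, and the ``if'' direction matches the paper's. For the ``only if'' direction the paper is more direct: after deducing acyclicity just as you do, it does not establish that $f_V$ is a bijection but instead, given incomparable vertices $x,y$ with $f(x)\leq f(y)$, directly constructs a morphism $g:G\to N_I(n+1)$---shifting the values on vertices reachable from $x$ up by one and setting $g(y)=f(x)$---which cannot factor through $f$ since $g(x)>g(y)$ while $f(x)\leq f(y)$. Your route (first proving $f_V$ is a bijection onto $[n]$, then deducing totality of reachability from the fact that the only increasing bijection of $[n]$ is the identity) is more structural and has the advantage of avoiding the case-by-case verification that the paper's explicit $g$ really is a graph morphism, a check the paper elides. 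One small remark on your closing commentary: in the present setting morphisms of $\linear$ are \emph{total} injective increasing maps, so there is no morphism $n\to n-1$ whatsoever when $n\geq 1$; hence surjectivity of $f_V$ already follows from the existence half of the universal property, and your appeal to uniqueness, while not wrong, is more than is strictly needed here (it would become essential in the partial-map setting of Section~\ref{sec:extensions}).
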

\begin{proof}
  Given an object $n\in\linear$, recall that $N_In$ is the graph whose objects
  are elements of $[n]$ and there is an arrow $i\to j$ if and only if
  $i<j$. Given a finite graph $G$, morphisms $f:G\to N_In$ are therefore in
  bijection with functions $f:V_G\to[n]$, where $V_G$ denotes the set of
  vertices of~$G$, such that $f(x)<f(y)$ whenever there exists an edge $x\to y$
  (or equivalently,  there exists a non-empty path $x\twoheadrightarrow
  y$).

  Consider a finite graph~$G\in\hat\graph$, by Lemma~\ref{lemma:diag-colim}, it
  induces a diagram~\eqref{eq:diag-graph} admitting a colimit if there is a
  universal arrow $f:G\to N_In$ with $n\in\linear$. From this it follows that the
  graph is acyclic: otherwise, we would have a non-empty path
  $x\twoheadrightarrow x$ for some vertex~$x$, which would imply
  $f(x)<f(x)$. Similarly, suppose that $G$ is a graph with vertices~$x$ and~$y$
  such that there is no path $x\twoheadrightarrow y$ or $y\twoheadrightarrow x$,
  and there is an universal morphism $f:G\to N_In$ for some
  $n\in\linear$. Suppose that $f(x)\leq f(y)$ (the case where $f(y)\leq f(x)$ is
  similar). We can define a morphism $g:G\to N_I(n+1)$ by $g(z)=f(z)+1$ if there
  is a path $x\twoheadrightarrow z$, $g(y)=f(x)$ and $g(z)=f(z)$ otherwise. This
  morphism is easily checked to be well-defined.
  Since we always have $f(x)\leq f(y)$ and $g(x)>g(y)$, there is no morphism
  $h:N_In\to N_I(n+1)$ such that $h\circ f=g$.

  Conversely, given a finite acyclic connected graph~$G$, the relation $\leq$
  defined on morphisms by $x\leq y$ whenever there exists a path
  $x\twoheadrightarrow y$ is a total order. Writing $n$ for the number of
  vertices in~$G$, the function $f:G\to N_In$, which to a vertex associates the
  number of vertices strictly below it \wrt $\leq$, is universal in the sense of
  Lemma~\ref{lemma:diag-colim}.
\end{proof}


\begin{proposition}
  \label{prop:fcfc-linear}
  The free conservative finite cocompletion $\concurrent$ of $\linear$ is
  equivalent to the full subcategory of $\hat\linear$ whose objects are
  presheaves~$P$ satisfying the conditions of
  Proposition~\ref{prop:cons-cocompl}.
\end{proposition}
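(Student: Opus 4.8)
\noindent
The plan is to establish the two inclusions of full subcategories of $\hat\linear$. One of them is already proved: Proposition~\ref{prop:cons-cocompl} shows that every presheaf lying in $\concurrent$ satisfies conditions~(1)--(3). So the work is to show, conversely, that any presheaf $P\in\hat\linear$ satisfying~(1)--(3) lies in $\concurrent$. By Theorem~\ref{thm:cons-cocomp} this amounts to two verifications: that $P$ preserves finite limits, and that $P$ is (isomorphic to) a finite colimit of representables.

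For the first verification I would use the lemmas already in place. By Lemma~\ref{lemma:pres-graph} it suffices to check that $P$ transports to a limit in $\Set$ the colimit of every diagram $D$ of the form $\El(G)\xrightarrow{\pi_G}\graph\xrightarrow{I}\linear$ generated by a finite graph $G$ admitting one; by Lemma~\ref{lemma:graphs-with-colim} these are exactly the finite tree-shaped (acyclic, connected) graphs, for which the colimit in $\linear$ is the object $n$, with $n$ the number of vertices of $G$, equipped with the universal comparison $f\colon G\to N_In$. The crux is to recognise both sides of the isomorphism to be proved as hom-sets of graphs: unfolding the definition of the underlying graph $I^*P$ (vertices $P(1)$, edges $P(2)$, source and target $P(s^1_1)$ and $P(s^1_0)$), the limit $\lim(P\circ D^\op)$ is $\hat\graph(G,I^*P)$, since a compatible family of vertices and edges of $P$ over those of $G$ is exactly a graph morphism $G\to I^*P$; and by conditions~(2)--(3) each $P(n)$, being the set of paths of length $n-1$ in $I^*P$, is in bijection with $\hat\graph(N_In,I^*P)$, the ``shortcut'' edges of $N_In$ being forced by transitivity. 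Under these identifications the canonical comparison map is precomposition with $f$, so it remains to see $f^*\colon\hat\graph(N_In,I^*P)\to\hat\graph(G,I^*P)$ is a bijection: for surjectivity, since $G$ is tree-shaped its reachability relation is a total order, so a morphism $g\colon G\to I^*P$ sends two vertices joined by a path of $G$ to two vertices joined by a non-empty path of $I^*P$, hence (condition~(2)) by a \emph{unique} edge, which provides the extension of $g$ along $f$; injectivity is again uniqueness of that edge.

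For the second verification, the natural candidate is the colimit $C$, computed in $\concurrent$, of the canonical diagram $\El(I^*P)\xrightarrow{\pi}\graph\xrightarrow{I}\linear\xrightarrow{y}\concurrent$ --- a finite diagram of representables, since $I^*P$ is a finite graph by condition~(1). To see $C\cong P$, the universal property of $C$ together with the same ``compatible family $=$ graph morphism'' computation give $\concurrent(C,Q)\cong\hat\graph(I^*P,I^*Q)$ for every $Q\in\concurrent$, so it is enough to show that $I^*$ induces a bijection $\hat\linear(P,Q)\cong\hat\graph(I^*P,I^*Q)$, natural in $Q$: a graph morphism $\varphi\colon I^*P\to I^*Q$ extends uniquely to a natural transformation $P\to Q$ by sending a path $x_0\to\cdots\to x_{n-1}$ of $P$ to the path $\varphi x_0\to\cdots\to\varphi x_{n-1}$ of $Q$, which is well defined and natural exactly because $P$ and $Q$ both satisfy~(2)--(3) --- so that paths of length $n-1$ in $I^*Q$ are precisely the elements of $Q(n)$, and a path is determined by its sequence of vertices (Proposition~\ref{prop:cons-cocompl} supplies these conditions for $Q$). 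Since the first verification places $P$ among the finite-limit-preserving presheaves, this comparison may be read inside that subcategory, and the Yoneda lemma there yields $C\cong P$; hence $P$ is a finite colimit of representables and, by Theorem~\ref{thm:cons-cocomp}, lies in $\concurrent$.

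I expect the real difficulty to be bookkeeping rather than a new idea: faithfully matching the abstract comparison maps arising from categories of elements, nerves and the realization adjunction with their concrete ``graph morphism'' form, and pinning down how conditions~(2) and~(3) turn each of these into a bijection. One subtlety worth bearing in mind --- already visible on the terminal presheaf --- is that the finite colimits in Theorem~\ref{thm:cons-cocomp} are to be computed inside the reflective subcategory of finite-limit-preserving presheaves, and not in $\hat\linear$, where they differ; this is exactly why the first verification is carried out before appealing to Yoneda in the second.
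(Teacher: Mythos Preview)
Your argument is correct, but it takes a different route from the paper's for the limit-preservation step. The paper proceeds by reduction: it observes that every tree-shaped graph is obtained from the linear graph $G_n$ (vertices $[n]$, edges $i\to i+1$) by successively adjoining ``shortcut'' edges $x\to y$ between vertices already connected by a non-empty path, and then appeals to the two specific diagrams~\eqref{eq:pres-lin} and~\eqref{eq:pres-trans} from the proof of Proposition~\ref{prop:cons-cocompl} --- the first handles the base graphs $G_n$, the second handles each shortcut addition --- so conditions~(2)--(3) were calibrated precisely to make these two families of colimits be preserved. Your approach instead bypasses this induction by identifying both $P(n)$ and $\lim(P\circ D^\op)$ directly as sets of graph morphisms into $I^*P$ and showing the comparison $f^*$ is a bijection from transitivity alone; this is more conceptual and makes the equivalence $P\cong I_*I^*P$ do the real work, at the cost of having to check carefully that the abstract comparison map really is restriction along~$f$. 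You are also more explicit than the paper about the second verification (that $P$ is a finite colimit of representables): the paper's proof addresses only limit preservation and leaves finiteness implicit, whereas you spell out the Yoneda argument in the reflective subcategory of finite-limit-preserving presheaves and flag the subtlety that the colimit is to be taken there rather than in~$\hat\linear$.
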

\begin{proof}
  By Lemma~\ref{lemma:pres-graph}, the category $\concurrent$ is equivalent to
  the full subcategory of~$\hat\linear$ whose objects are presheaves preserving
  limits of diagrams of the form~\eqref{eq:diag-graph} generated by some graph
  $G\in\hat\graph$ which admits a colimit, \ie by
  Lemma~\ref{lemma:graphs-with-colim} the finite graphs which are acyclic and
  connected. We write $G_n$ for the graph with~$[n]$ as vertices and edges
  $i\to(i+1)$ for $0\leq i<n-1$. It can be shown that any acyclic and connected
  finite graph can be obtained from the graph~$G_n$, for some $n\in\N$, by
  iteratively adding an edge $x\to y$ for some vertices~$x$ and~$y$ such that
  there exists a non-empty path~$x\twoheadrightarrow y$. Namely, suppose given
  an acyclic and connected finite graph~$G$. The relation $\leq$ on its
  vertices, defined by $x\leq y$ whenever there exists a path
  $x\twoheadrightarrow y$, is a total order, and therefore the graph~$G$
  contains $G_n$, where $n$ is the number of edges of~$G$. An edge in~$G$ which
  is not in~$G_n$ is necessarily of the form $x\to y$ with $x\leq y$, otherwise
  it would not be acyclic.
  Since by Proposition~\ref{prop:cons-cocompl}, see~\eqref{eq:pres-trans}, the
  diagram generated by a graph of the form
  \[
  \begin{tikzpicture}[scale=1,baseline={(current bounding box.west)}]
    \ssnode{(0,0)};
    \sstrans{(0,0)}{(0.5,0.5)};
    \ssnode{(0.5,0.5)};
    \sstrans{(0.5,0.5)}{(1.5,0.5)};
    \ssnode{(1.5,0.5)};
    \sstrans{(1.5,0.5)}{(2.5,0.5)};
    \draw (3,0.5) node{$\ldots$};
    \sstrans{(3.5,0.5)}{(4.5,0.5)};
    \ssnode{(4.5,0.5)};
    \sstrans{(4.5,0.5)}{(5.5,0.5)};
    \ssnode{(5.5,0.5)};
    \sstrans{(5.5,0.5)}{(6,0)};
    \ssnode{(6,0)};
    \sstrans{(0,0)}{(6,0)};
  \end{tikzpicture}
  \]
  is preserved by presheaves in~$\concurrent$ (which corresponds to adding an
  edge between vertices at the source and target of a non-empty path), it is
  enough to show that presheaves in~$\concurrent$ preserve diagrams generated by
  graphs~$G_n$. This follows again by Proposition~\ref{prop:cons-cocompl},
  see~\eqref{eq:pres-lin}.
\end{proof}

One can notice that a presheaf $P\in\concurrent$ is characterized by its
underlying graph since $P(0)$ is reduced to one element and $P(n)$ with $n>2$ is
the set of paths of length~$n$ in this underlying graph: $P\cong I_*(I^*P)$. We
can therefore simplify the description of the cocompletion of~$\linear$ as
follows:

\begin{theorem}
  \label{thm:fcfc}
  The free conservative finite cocompletion $\concurrent$ of~$\linear$ is
  equivalent to the full subcategory of the category $\hat\graph$ of graphs,
  whose objects are finite graphs such that for every non-empty path
  $x\twoheadrightarrow y$ there exists exactly one edge $x\to y$. Equivalently,
  it can be described as the category whose objects are finite sets equipped
  with a transitive relation $<$, and functions respecting relations.
\end{theorem}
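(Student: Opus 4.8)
The plan is to combine Proposition~\ref{prop:fcfc-linear} --- which presents $\concurrent$ as the full subcategory $\mathcal{P}\subseteq\hat\linear$ of presheaves satisfying conditions (1)--(3) of Proposition~\ref{prop:cons-cocompl} --- with the observation made just above the statement, that every such $P$ satisfies $P\cong I_*(I^*P)$. Concretely, the proof splits into two translations: passing from these presheaves to their underlying graphs, landing in a full subcategory of $\hat\graph$; and then re-encoding the resulting class of graphs as finite sets equipped with a transitive relation.

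For the first translation, the isomorphism $P\cong I_*I^*P$ together with the adjunction $I^*\dashv I_*$ gives, for $P,Q\in\mathcal{P}$, natural isomorphisms $\hat\linear(P,Q)\cong\hat\linear(P,I_*I^*Q)\cong\hat\graph(I^*P,I^*Q)$ realised by $g\mapsto I^*g$ (the triangle identities make the composite adjoint transpose collapse to $I^*$), so that $I^*$ restricts to a fully faithful functor on $\mathcal{P}$. It then remains to pin down its essential image. By conditions~(1)--(2), $I^*P$ is a finite graph in which every non-empty path $x\twoheadrightarrow y$ admits exactly one edge $x\to y$; conversely, given a finite graph $G$ with this property, I would check that $I_*G$ lies in $\mathcal{P}$ --- finiteness of the set of edges of $I_*G$ because $G$ has at most one edge between any two vertices, the path condition because paths in the underlying graph of $I_*G$ are exactly paths in $G$, and condition~(3) by the very definition of $I_*$ --- and that $I^*I_*G\cong G$. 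This establishes the first description.

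For the second translation, I would first note that the unique-edge property forces at most one edge between any two vertices (an edge $x\to y$ is itself a non-empty path $x\twoheadrightarrow y$), so such a graph carries the same information as a relation $<$ on its finite vertex set, with $x<y$ iff there is an edge $x\to y$. A non-empty path of length $n$ from $x$ to $y$ then witnesses membership in the $n$-fold composite of $<$, so non-empty paths from $x$ to $y$ exist precisely when $x$ and $y$ are related in the transitive closure of $<$; since there is never more than one edge, ``exactly one edge per non-empty path'' reduces to ``at least one'', i.e.\ to $<$ being transitive. On morphisms, a graph map whose target has at most one edge between any two vertices is determined by its effect on vertices and is precisely a function preserving the relation, so restricting to the subcategory at hand gives exactly the category of finite sets equipped with a transitive relation and relation-preserving functions.

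The delicate point, I expect, will be the first translation --- verifying carefully that $I^*$ is full and faithful on $\mathcal{P}$ and computing its essential image exactly; the graph/relation dictionary of the second translation is routine. It is worth flagging that the transitive relation need not be irreflexive: a loop $x\to x$ is a non-empty path $x\twoheadrightarrow x$ and hence permitted, which is exactly the ``allowing cycles'' anticipated earlier in the paper.
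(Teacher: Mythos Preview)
Your proposal is correct and takes essentially the same approach as the paper: the paper's argument is precisely the observation $P\cong I_*(I^*P)$ for $P\in\concurrent$ (stated in the paragraph preceding the theorem), combined with Proposition~\ref{prop:fcfc-linear}, and you have simply fleshed out the adjunction and essential-image computations that the paper leaves implicit. Your careful use of the triangle identities to identify the composite bijection with $I^*$, and your remark that the relation need not be irreflexive, are both accurate elaborations of what the paper takes for granted.
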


\noindent
In this category, pushouts can be explicitly described as follows:

\begin{proposition}
  \label{prop:pushouts}
  With the last above description, the pushout of a diagram in $\concurrent$
  $(B,<_B)\xleftarrow{f}(A,<_A)\xrightarrow{g}(C,<_C)$ is $B\uplus C/\sim$ with
  $B\owns b\sim c\in C$ whenever there exists $a\in A$ with $f(a)=b$ and
  $f(a)=c$, equipped with the transitive closure of the relation inherited by
  $<_B$ and $<_C$.
\end{proposition}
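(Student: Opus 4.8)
The plan is to realize $\concurrent$, in the guise of Theorem~\ref{thm:fcfc} as a full subcategory of the category $\fhat\graph$ of finite graphs, as a \emph{reflective} subcategory, and then obtain the pushout as the reflection of the pushout computed in $\fhat\graph$. First I would show that the inclusion $\concurrent\hookrightarrow\fhat\graph$ has a left adjoint, the \emph{transitive closure} functor $T$, sending a finite graph $G$ to the graph $\bar G$ with the same vertices and with an edge $x\to y$ exactly when there is a non-empty path $x\twoheadrightarrow y$ in $G$. One checks that $\bar G$ lies in $\concurrent$ (a non-empty path in $\bar G$ induces a non-empty path in $G$ with the same endpoints, and by construction there is at most one edge between two vertices of $\bar G$, so $\bar G$ has exactly one edge per non-empty path; equivalently its edge relation is transitive), that the evident map $\eta_G:G\to\bar G$, identity on vertices and sending an edge — a length-one path — to the corresponding edge of $\bar G$, is a graph morphism, and that it is universal: any morphism $\phi:G\to H$ with $H\in\concurrent$ sends non-empty paths to non-empty paths, hence factors as $\bar\phi\circ\eta_G$ for a unique $\bar\phi:\bar G\to H$ (necessarily $\phi$ on vertices, which already determines the action on edges since both $\bar G$ and $H$ have at most one edge between any two vertices).

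Next I would invoke the general fact that finite colimits in a full reflective subcategory are computed by applying the reflector to the colimit taken in the ambient category. The category $\fhat\graph$ is closed under finite colimits in $\hat\graph$ (finiteness is preserved) and these are computed pointwise: the pushout of $(B,<_B)\xleftarrow{f}(A,<_A)\xrightarrow{g}(C,<_C)$ has vertex set $(B\uplus C)/{\sim}$, where $\sim$ is the equivalence relation \emph{generated} by $f(a)\sim g(a)$ for $a\in A$ — one really does need the generated equivalence relation, since $f$ and $g$ need not be injective in $\concurrent$ — and similarly on edges; because $f$ and $g$ are graph morphisms, identifying $f(a)$ with $g(a)$ creates no new edges, so the edge relation of this pushout is precisely the one inherited from $<_B$ and $<_C$ through the quotient. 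Applying $T$ then replaces this edge relation by its transitive closure, and $T$ leaves the vertex set untouched, so the result is the object described in the statement, with the two pushout injections $B\to(B\uplus C)/{\sim}$ and $C\to(B\uplus C)/{\sim}$.

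The step needing most care is the reflectivity claim, and inside it the universality of $\eta_G$: one must check that a morphism of underlying graphs into an object of $\concurrent$ automatically respects the transitive-closure structure, which is exactly where the defining property of $\concurrent$ (``exactly one edge per non-empty path'') is used; checking that $T$ preserves finiteness is a minor additional point. As an alternative that avoids the adjunction, one can verify the universal property of $(B\uplus C)/{\sim}$ directly: its underlying set is the pushout in $\Set$, so a cocone $(u':B\to D,\ v':C\to D)$ with $u'f=v'g$ induces a unique function $w$ out of it, and $w$ respects relations because every instance of the transitive closure relation on $(B\uplus C)/{\sim}$ is witnessed by a finite chain of $<_B$- and $<_C$-steps, each transported to a $<_D$-step by $u'$ or $v'$ and then composed using transitivity of $<_D$.
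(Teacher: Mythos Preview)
The paper states Proposition~\ref{prop:pushouts} without proof, so there is no argument in the paper to compare yours against. Your approach via a transitive-closure reflector $T:\fhat\graph\to\concurrent$ is correct and is a natural way to supply the missing proof; the direct verification you sketch at the end would work equally well and is presumably what the authors had in mind when leaving this to the reader.

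Two minor remarks. First, when you write that ``the edge relation of this pushout is precisely the one inherited from $<_B$ and $<_C$ through the quotient'', bear in mind that the pushout in $\hat\graph$ is not in general a simple graph: an edge $b<_Bb'$ and an edge $c<_Cc'$ with $b\sim c$ and $b'\sim c'$ may yield two distinct parallel edges in the pushout if they are not glued through an edge of~$A$. This does no harm, since your reflector~$T$ collapses parallel edges anyway and only the existence of an edge matters for the transitive closure, but the sentence as phrased is slightly imprecise. Second, your observation that $\sim$ must be the equivalence relation \emph{generated} by $f(a)\sim g(a)$ (since $f$ and $g$ need not be injective in~$\concurrent$) is a genuine sharpening of the statement as printed, which also contains a typo: the second ``$f(a)=c$'' should read ``$g(a)=c$''.
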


\noindent\emph{Lines with labels.}
The construction can be extended to the labeled case (\ie $L$ is not
necessarily a singleton). The forgetful functor $\hat\linear\to\Set$
sending a presheaf~$P$ to the set $P(1)$ admits a right adjoint
$\lss:\Set\to\hat\linear$. Given $n\in\N^*$ the elements of $\lss\lines(n)$ are
words~$u$ of length~$n$ over~$L$, with $!L(s_i^{n-1})(u)$ being the word
obtained from~$u$ by removing the $i$-th letter. The free conservative finite
cocompletion~$\concurrent$ of $\linear$ is the slice
category~$\linear/\lss\lines$, whose objects are pairs $(P,\ell)$ consisting of
a finite presheaf $P\in\hat\linear$ together with a \emph{labeling} morphism
$\ell:P\to\lss\lines$ of presheaves. Alternatively, the description of
Proposition~\ref{thm:fcfc} can be straightforwardly adapted by labeling the
elements of the objects by elements of~$\lines$ (labels should be preserved by
morphisms), thus justifying the use of labels for the vertices in following
examples.

\section{Examples}
\label{sec:examples}
In this section, we give some examples of merging (\ie pushout) of patches.

\begin{example}
  \label{ex:pushout}
  Suppose that starting from a file~$ab$, one user inserts a line~$a'$ at the
  beginning and~$c$ in the middle, while another one inserts a line~$d$ in the
  middle. After merging the two patches, the resulting file is the pushout of
  \[
  \begin{tikzpicture}[scale=1,baseline={(0,.5)}]
    \draw[->,shorten <=2pt,shorten >=2pt](0,1.5) .. controls (.5,.75) .. (0,0);
    \draw[->,shorten <=2pt,shorten >=2pt](0,1.5) .. controls (.25,1) .. (0,.5);
    \draw[->,shorten <=2pt,shorten >=2pt](0,1) .. controls (.25,.5) .. (0,0);
    \ssnode{(0,1.5)};
    \draw (0,1.5) node[left]{$a'$};
    \ssnode{(0,1)};
    \draw (0,1) node[left]{$a$};
    \ssnode{(0,.5)};
    \draw (0,.5) node[left]{$c$};
    \ssnode{(0,0)};
    \draw (0,0) node[left]{$b$};
    \sstrans{(0,1.5)}{(0,1)};
    \sstrans{(0,1)}{(0,.5)};
    \sstrans{(0,.5)}{(0,0)};
  \end{tikzpicture}
  \xleftarrow{f_1}
  \begin{tikzpicture}[scale=1,baseline={(0,.5)}]
    \ssnode{(0,1)};
    \draw (0,1) node[left]{$a$};
    \ssnode{(0,0)};
    \draw (0,0) node[left]{$b$};
    \sstrans{(0,1)}{(0,0)};
  \end{tikzpicture}
  \xrightarrow{f_2}
  \begin{tikzpicture}[scale=1,baseline={(0,.5)}]
    \draw[->,shorten <=2pt,shorten >=2pt](0,1) .. controls (.25,.5) .. (0,0);
    \ssnode{(0,1)};
    \draw (0,1) node[left]{$a$};
    \ssnode{(0,.5)};
    \draw (0,.5) node[left]{$d$};
    \ssnode{(0,0)};
    \draw (0,0) node[left]{$b$};
    \sstrans{(0,1)}{(0,.5)};
    \sstrans{(0,.5)}{(0,0)};
  \end{tikzpicture}
  \qquad\quad\text{which is}\qquad\quad
  \begin{tikzpicture}[scale=1,baseline={(0,.5)}]
    \ssnode{(.5,1.5)};
    \draw (.5,1.5) node[left]{$a'$};
    \ssnode{(.5,1)};
    \draw (.5,1) node[left]{$a$};
    \ssnode{(0,.5)};
    \draw (0,.5) node[left]{$c$};
    \ssnode{(1,.5)};
    \draw (1,.5) node[right]{$d$};
    \ssnode{(.5,0)};
    \draw (.5,0) node[left]{$b$};
    \sstrans{(.5,1.5)}{(.5,1)};
    \draw[->,shorten <=2pt,shorten >=2pt](.5,1.5) .. controls (.75,.75) .. (.5,0);
    \draw[->,shorten <=2pt,shorten >=2pt](.5,1.5) .. controls (.05,1) .. (0,.5);
    \draw[->,shorten <=2pt,shorten >=2pt](.5,1.5) .. controls (.95,1) .. (1,.5);
    \sstrans{(.5,1)}{(.5,0)};
    \sstrans{(.5,1)}{(0,.5)};
    \sstrans{(0,.5)}{(.5,0)};
    \sstrans{(.5,1)}{(1,.5)};
    \sstrans{(1,.5)}{(.5,0)};
  \end{tikzpicture}
  \]
\end{example}

\newcommand{\seq}{\textnormal{seq}}

\begin{example}
  Write $G_1$ for the graph with one vertex and no edges, and $G_2$ for the
  graph with two vertices and one edge between them. We write $s,t:G_1\to G_2$
  for the two morphisms in $\concurrent$. Since $\concurrent$ is finitely
  cocomplete, there is a coproduct $G_1+G_1$ which gives, by universal property,
  an arrow $\seq:G_1+G_1\to G_2$:
  \[
  \vxym{
    &G_2&\\
    G_1\ar[ur]^{s}\ar[r]&G_1+G_1\ar@{.>}[u]^<<{\seq}&\ar[l]\ar[ul]_tG_1\\
  }
  \qquad\text{or graphically}\qquad
  \vcenter{
    \xymatrix{
      &\anedge&\\
      \anode\ar[ur]^-s\ar[r]&\twonodes\ar@{.>}[u]^<<<<<{\seq}&\ar[l]\ar[ul]_-t\anode
    }
  }
  \]
  that we call the \emph{sequentialization morphism}. This morphism corresponds
  to the following patch: given two possibilities for a line, a user can decide
  to turn them into two consecutive lines. We also write $\seq':G_1+G_1\to G_2$
  for the morphism obtained similarly by exchanging $s$ and $t$ in the above
  cocone. Now, the pushout of
  \[
  \anedge
  \quad\xleftarrow{\seq}\quad
  \twonodes
  \quad\xrightarrow{\seq'}\quad
  \anegde
  \qquad\text{is}\qquad
  \begin{tikzpicture}[baseline={(current bounding box.west)}]
    \ssnode{(0,0)};
    \ssnode{(.5,0)};
    \draw[->,shorten <=2pt,shorten >=2pt](0,0) .. controls (-.12,.12) .. (-.20,0) .. controls (-.12,-.12) .. (0,0);
    \draw[->,shorten <=2pt,shorten >=2pt](.5,0) .. controls (.62,.12) .. (.70,0) .. controls (.62,-.12) .. (.5,0);
    \draw[->,shorten <=2pt,shorten >=2pt](0,0) .. controls (.25,.25) .. (.5,0);
    \draw[->,shorten <=2pt,shorten >=2pt](.5,0) .. controls (.25,-.25) .. (0,0);
  \end{tikzpicture}
  \]
  which illustrates how cyclic graphs appear in~$\concurrent$ during the
  cocompletion of~$\linear$.
\end{example}

\begin{example}
  \label{ex:merging}
  With the notations of the previous example, by taking the coproduct of two
  copies of $\id_{G_1}:G_1\to G_1$, there is a universal morphism $G_1+G_1\to
  G_1$, which illustrates how two independent lines can be merged by a patch (in
  order to resolve conflicts).
  \[
  \vcenter{
    \xymatrix{
      &\anode&\\
      \anode\ar[ur]^-{\id_\bullet}\ar[r]&\twonodes\ar@{.>}[u]|<<<<<{\mathrm{merge}}&\ar[l]\ar[ul]_-{\id_\bullet}\anode
    }
  }
  \]
\end{example}

\section{Handling deletions of lines}
\label{sec:extensions}
All the steps performed in previous sections in order to compute the free
conservative finite cocompletion of the category~$\tlinear$ can be adapted in
order to compute the cocompletion~$\concurrent$ of the category~$\linear$ as
introduced in Definition~\ref{def:linear}, thus adding support for deletion of
lines in patches.
In particular, the generalization of the description given by
Theorem~\ref{thm:fcfc} turns out to be as follows.

\begin{theorem}
  \label{theorem}
  The free conservative finite cocompletion~$\concurrent$ of the
  category~$\linear$ is the category whose objects are triples $(A,<,\ell)$
  where $A$ is a finite set of lines, $<$ is a transitive relation on $A$
  and~$\ell:A\to L$ associates a label to each line, and morphisms
  $f:(A,<_A,\ell_A)\to (B,<_B,\ell_B)$ are partial functions $f:A\to B$ such
  that for every $a,a'\in A$ both admitting an image under~$f$, we have
  $\ell_B(f(a))=\ell_A(a)$, and $a<_Aa'$ implies $f(a)<_Bf(a')$.
\end{theorem}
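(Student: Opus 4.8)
The strategy is to mirror, step by step, the development carried out for $\tlinear$ in Sections~\ref{sec:cocompl}, replacing the subcategory $\graph$ of $1$ and $2$ by an appropriate small dense subcategory of $\linear$ that also captures deletions. Concretely, I would let $\graph'$ be the full subcategory of $\linear$ on the objects $0$, $1$ and $2$, so that presheaves over $\graph'$ are ``graphs with a distinguished empty element'', and check as before that $\graph'$ is dense in $\linear$: every object $n$ is canonically the colimit $\colim(\El(N_{I'}n)\xrightarrow{\pi}\graph'\xrightarrow{I'}\linear)$, using Proposition~\ref{prop:presimp-free-cat} to see that the generators $s_i^n$, $d_i^n$ and the relations~\eqref{eq:simpl-ax} are exactly what is needed to present $n$ in terms of the $s_i$'s and $d_i$'s between $0$, $1$, $2$. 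The labeled case is then handled, exactly as at the end of Section~\ref{sec:cocompl}, by passing to the slice $\hat\linear/\lss\lines$, so it suffices to treat the unlabeled case and add labels at the end.

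Next I would run the three-lemma argument of Section~\ref{sec:cocompl} verbatim in this new setting. The analogue of Lemma~\ref{lemma:pres-graph} says that a presheaf $P\in\hat\linear$ preserves finite limits iff it sends to limits all diagrams generated by a finite presheaf $G\in\hat{\graph'}$ admitting a colimit; the proof is the same comprehensive-factorisation / ``every diagram is a discrete fibration'' argument, now also allowing the degenerate arrows into $0$. The analogues of Lemmas~\ref{lemma:diag-colim} and~\ref{lemma:graphs-with-colim} then characterise which such $G$ admit a colimit; the key point is that an object $G$ of $\hat{\graph'}$ is a finite graph together with a specified set of ``deleted'' vertices (the elements of $G$ lying over $0\in\graph'$), and morphisms $G\to N_{I'}n$ correspond to partial monotone maps, deleted vertices going to the ``nowhere'' point. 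One finds, as before, that a colimit exists exactly when the non-deleted part is tree-shaped (acyclic and connected), the deleted vertices contributing nothing to the colimit. Feeding this back through Theorem~\ref{thm:cons-cocomp} and Proposition~\ref{prop:fcfc-linear} gives that $\concurrent$ is the full subcategory of $\hat\linear$ of presheaves whose underlying graph is finite, has exactly one edge $x\to y$ for each non-empty path $x\twoheadrightarrow y$, and with $P(n)$ the set of paths of length $n$; such a presheaf is again determined by its underlying graph, i.e. by a finite set with a transitive relation, so the objects match those in the statement.

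The genuinely new content, and what I expect to be the main obstacle, is the morphism side: I must verify that natural transformations between two such presheaves $P,Q$, viewed now inside $\hat\linear$ rather than $\hat\graph$, correspond precisely to \emph{partial} functions on the underlying vertex sets that are monotone for the transitive relations (and, in the labeled case, label-preserving). The subtlety is that $\linear$ — unlike $\tlinear$ — contains the deletion maps $\varepsilon_a:a\to I$ (equivalently the $d_i^n$), and $P(0)$ is a one-point set, so a natural transformation is not forced to be total on $P(1)$: a vertex of $P$ may be sent ``into $P(0)$'', which is exactly the partiality in the statement. I would make this precise by computing $\concurrent(yn,P)\cong P(n)$ via Yoneda, noting $y1$ has underlying graph one vertex and $y0$ the empty-but-for-the-$0$-point presheaf, and then using density of $\graph'$ to reduce an arbitrary morphism $P\to Q$ to its components on $0$, $1$ and $2$: the component at $1$ is a partial function $V_P\rightharpoonup V_Q$ (partiality coming from the map $P(d_0^0)\colon P(1)\to P(0)$ versus $Q(1)\to Q(0)$), naturality with respect to $s_0^1,s_1^1$ forces monotonicity on edges hence on the transitive relation, and naturality with respect to the remaining generators is automatic since $P$ and $Q$ are nerves of their underlying graphs. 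Conversely any label-preserving monotone partial function visibly extends to a natural transformation by sending length-$n$ paths to length-$n$ paths (contracting through deleted vertices), and these two constructions are mutually inverse. Assembling all of this, together with the slice-category argument for labels, yields the description of $\concurrent$ claimed in the theorem.
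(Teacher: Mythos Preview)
Your high-level plan coincides with the paper's: enlarge $\graph$ to the full subcategory on $0,1,2$, check density, rerun Lemmas~\ref{lemma:pres-graph}--\ref{lemma:graphs-with-colim} and Proposition~\ref{prop:fcfc-linear}, then add labels by slicing. That is exactly what the appendix does.

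There is, however, a genuine confusion in your account of where the partiality of morphisms comes from, and as written that step would not go through. A natural transformation $\theta:P\to Q$ in $\hat\linear$ gives a \emph{total} function $\theta_1:P(1)\to Q(1)$; there is no mechanism by which ``a vertex of $P$ may be sent into $P(0)$''. The correct picture, which the paper makes explicit, is that for $P$ in the cocompletion one has $P(0)\cong\{\star\}$ (since $0$ is initial in $\linear$), and the map $P(d_0^0):P(0)\to P(1)$ --- note the direction --- picks out a distinguished element $\bot_P\in P(1)$. Thus $P(1)$ is not the set of ``real'' vertices but that set together with $\bot$; naturality with respect to $d_0^0$ forces $\theta_1(\bot_P)=\bot_Q$, so $\theta_1$ is a morphism of \emph{pointed} sets. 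The passage to partial functions is then the standard equivalence between pointed sets and sets-with-partial-maps (and its graph analogue, which the paper states and uses as a separate proposition). Your description of presheaves on $\graph'$ as ``graphs with a specified set of deleted vertices (the elements of $G$ lying over $0$)'' is correspondingly off: in the cocompletion $G(0)$ is a singleton, not a set of deleted vertices, and the extra datum is the pointing of $G(1)$. Once you replace your partiality explanation by the pointed/partial equivalence, the rest of your outline matches the paper, including the observation that the universal cocone map $G\to N_In$ is forced to be total, so that Lemma~\ref{lemma:graphs-with-colim} carries over unchanged.
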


\noindent
Similarly, pushouts in this category can be computed as described in
Proposition~\ref{prop:pushouts}, generalized in the obvious way to partial
functions.

\begin{example}
  Suppose that starting from a file $abc$, one user inserts a line~$d$ after~$a$
  and the other one deletes the line~$b$. The merging of the two patches
  (in~$\concurrent'$) is the pushout of
  \[
  \begin{tikzpicture}[scale=1,baseline={(0,.75)}]
    \draw[->,shorten <=2pt,shorten >=2pt](0,1.5) .. controls (.5,.75) .. (0,0);
    \draw[->,shorten <=2pt,shorten >=2pt](0,1.5) .. controls (.25,1) .. (0,.5);
    \draw[->,shorten <=2pt,shorten >=2pt](0,1) .. controls (.25,.5) .. (0,0);
    \ssnode{(0,1.5)};
    \draw (0,1.5) node[left]{$a$};
    \ssnode{(0,1)};
    \draw (0,1) node[left]{$d$};
    \ssnode{(0,.5)};
    \draw (0,.5) node[left]{$b$};
    \ssnode{(0,0)};
    \draw (0,0) node[left]{$c$};
    \sstrans{(0,1.5)}{(0,1)};
    \sstrans{(0,1)}{(0,.5)};
    \sstrans{(0,.5)}{(0,0)};
  \end{tikzpicture}
  \xleftarrow{f_1}
  \begin{tikzpicture}[scale=1,baseline={(0,.75)}]
    \draw[->,shorten <=2pt,shorten >=2pt](0,1.5) .. controls (.25,.75) .. (0,0);
    \ssnode{(0,1.5)};
    \draw (0,1.5) node[left]{$a$};
    \ssnode{(0,.5)};
    \draw (0,.5) node[left]{$b$};
    \ssnode{(0,0)};
    \draw (0,0) node[left]{$c$};
    \sstrans{(0,1.5)}{(0,.5)};
    \sstrans{(0,.5)}{(0,0)};
  \end{tikzpicture}
  \xrightarrow{f_2}
  \begin{tikzpicture}[scale=1,baseline={(0,.75)}]
    \ssnode{(0,1.5)};
    \draw (0,1.5) node[left]{$a$};
    \ssnode{(0,0)};
    \draw (0,0) node[left]{$c$};
    \sstrans{(0,1.5)}{(0,0)};
  \end{tikzpicture}
  \qquad\quad\text{which is}\qquad\quad
  \begin{tikzpicture}[scale=1,baseline={(0,.75)}]
    \draw[->,shorten <=2pt,shorten >=2pt](0,1.5) .. controls (.25,.75) .. (0,0);
    \ssnode{(0,1.5)};
    \draw (0,1.5) node[left]{$a$};
    \ssnode{(0,1)};
    \draw (0,1) node[left]{$d$};
    \ssnode{(0,0)};
    \draw (0,0) node[left]{$c$};
    \sstrans{(0,1.5)}{(0,1)};
    \sstrans{(0,1)}{(0,0)};
  \end{tikzpicture}
  \]
  \ie the file $adc$. Notice that the morphism~$f_2$ is partial: $b$ has no
  image.
\end{example}

Interestingly, a category very similar to the one we have described in
Theorem~\ref{theorem} was independently proposed by Houston~\cite{bosker} based on
a construction performed in~\cite{cockett1995categories} for modeling
asynchronous processes. This category is not equivalent to ours because
morphisms are reversed partial functions: it is thus not the most general model
(in the sense of being the free finite cocompletion). As a simplified explanation for this,
consider the category $\FSet$ which is the finite cocompletion of~$\one$. This
category is finitely complete (in addition to cocomplete), thus $\FSet^\op$ is
finitely cocomplete and $\one$ embeds fully and faithfully in it. However,
$\FSet^\op$ is not the finite cocompletion of~$\one$. Another way to see this is
that this category does not contain the ``merging'' morphism of
Example~\ref{ex:merging}, but it contains a dual morphism ``duplicating'' lines.

\section{Concluding remarks and future works}
\label{sec:future-works}
In this paper, we have detailed how we could derive from universal constructions a
category which suitably models files resulting from conflicting modifications.
It is finitely cocomplete, thus the merging of any modifications of the file is
well-defined.

We believe that the interest of our 
 methodology lies in the fact that it
adapts easily to other more complicated base categories~$\linear$ than the two
investigated here: in future works, we should explain how to extend the
model in order to cope with multiple files (which can be moved, deleted, etc.),
different file types (containing text, or more structured data such as
\textsc{xml} trees). Also, the structure of repositories (partially ordered sets
of patches) is naturally modeled by event structures labeled by morphisms
in~$\concurrent$,
which will be detailed in future works, as well as
how to model usual operations on repositories: cherry-picking
(importing only one patch from another repository), using branches, removing a patch, etc. It
would also be interesting to explore axiomatically the addition of inverses for
patches, following other works hinted at in the introduction.

Once the theoretical setting is clearly established, we plan to investigate
algorithmic issues (in particular, how to efficiently represent and manipulate
the conflicting files, which are objects in~$\concurrent$). This should
eventually serve as a basis for the implementation of a theoretically sound and
complete distributed version control system (no unhandled corner-cases as in
most current implementations of \dvcs).

\medskip
\noindent{\emph{Acknowledgments.}}
The authors would like to thank P.-A.
Melliès, E. Haucourt, T. Heindel, T. Hirschowitz and the anonymous reviewers for their enlightening comments and
suggestions.

\bibliographystyle{abbrv}
\bibliography{patches}

\newpage
\appendix
\section{A geometric interpretation of presheaves on~$\tlinear$}

Since presheaf categories are sometimes a bit difficult to grasp, we recall here
the geometric interpretation that can be done for presheaves
in~$\hat\tlinear$. We forget about labels of lines and for simplicity suppose
that the empty file is not allowed (the objects are strictly positive
integers). In this section, we denote this category by~$\linear$. The same
reasoning can be performed on the usual category~$\tlinear$, and even $\linear$,
but the geometrical explanation is a bit more involved to describe.

In this case, the presheaves in~$\hat\linear$ can easily be described in
geometrical terms: the elements~$P$ of $\hat\linear$ are \emph{presimplicial
  sets}. Recall from Proposition~\ref{prop:presimp-free-cat} that the
category~$\linear$ is the free category whose objects are strictly positive
natural integers, containing for every integers~$n\in\N^*$ and $i\in[n+1]$
morphisms $s^n_i:n\to n+1$, subject to the relations
$s^{n+1}_is^n_j=s^{n+1}_{j+1}s^n_i$ whenever~$0\leq i\leq j<n$. Writing
$y:\linear\to\hat\linear$ for the Yoneda embedding, a representable presheaf
$y(n+1)\in\finhat\linear_+$ can be pictured geometrically as an $n$-simplex: a
$0$-simplex is a point, a $1$\nbd{}simplex is a segment, a $2$-simplex is a
(filled) triangle, a $3$-simplex is a (filled) tetrahedron, etc.:
\[
\begin{array}{c@{\qquad\qquad}c@{\qquad\qquad}c@{\qquad\qquad}c@{\qquad\qquad}c}
  \begin{tikzpicture}
    \ssnode{(0,0)};
  \end{tikzpicture}
  \quad
  &
  \begin{tikzpicture}
    \ssnode{(0,0)};
    \ssnode{(1,0)};
    \sstrans{(0,0)}{(1,0)};
  \end{tikzpicture}
  &
  \begin{tikzpicture}
    \fill[fill=lightgray] (0,0) -- (1,0) -- (0.5,1) -- (0,0);
    \ssnode{(0,0)};
    \ssnode{(.5,1)};
    \ssnode{(1,0)};
    \sstrans{(0,0)}{(.5,1)};
    \sstrans{(.5,1)}{(1,0)};
    \sstrans{(0,0)}{(1,0)};
  \end{tikzpicture}
  &
  \begin{tikzpicture}
    \fill[fill=lightgray] (0,0) -- (1,0) -- (1.2,0.4) -- (0.5,1) -- (0,0);
    \sstransdashed{(0,0)}{(1.2,0.4)};
    \ssnode{(0,0)};
    \ssnode{(1,0)};
    \ssnode{(1.2,0.4)};
    \ssnode{(0.5,1)};
    \sstrans{(0,0)}{(1,0)};
    \sstrans{(0.5,1)}{(1,0)};
    \sstrans{(0,0)}{(0.5,1)};
    \sstrans{(1,0)}{(1.2,0.4)};
    \sstrans{(0.5,1)}{(1.2,0.4)};
  \end{tikzpicture}
  \\
  y(1)&y(2)&y(3)&y(4)
\end{array}
\]
Notice that the $n$-simplex has $n$ faces which are $(n-1)$\nbd{}dimensional
simplices, and these are given by the image under $y(s^n_i)$, with $i\in[n]$,
of the unique element of $y(n+1)(n+1)$: the $i$-th face of an $n$-simplex is the
$(n-1)$-simplex obtained by removing the $i$-th vertex from the simplex. More generally, a
\begin{wrapfigure}{r}{18mm}
  \vspace{-4ex}
  \[
  \begin{tikzpicture}
    \fill[fill=lightgray] (0,0) -- (1,0) -- (1,1) -- (0,0);
    \ssnode{(0,0)};
    \draw (0,0) node[below left]{$a$};
    \ssnode{(0,1)};
    \draw (0,1) node[above left]{$b$};
    \ssnode{(1,1)};
    \draw (1,1) node[above right]{$c$};
    \ssnode{(1,0)};
    \draw (1,0) node[below right]{$d$};
    \sstrans[{node[left]{$f$}}]{(0,0)}{(0,1)};
    \sstrans[{node[above]{$g$}}]{(0,1)}{(1,1)};
    \sstrans[{node[right]{$h$}}]{(1,0)}{(1,1)};
    \sstrans[{node[below]{$i$}}]{(0,0)}{(1,0)};
    \sstrans[{node{$j$}}]{(0,0)}{(1,1)};
    \draw (0.75,0.25) node{$\alpha$};
  \end{tikzpicture}
  \]
  \vspace{-7ex}
\end{wrapfigure}
presheaf $P\in\finhat\linear_+$ (a finite presimplicial set)
is a finite colimit of representables:
every such presheaf can be pictured as a gluing of simplices.
For instance, the half-filled square on the right corresponds to the
presimplicial set~$P$ with $P(1)=\set{a,b,c,d}$, $P(2)=\set{f,g,h,i,j}$,
$P(3)=\set{\alpha}$ with faces $P(s^1_1)(f)=a$, $P(s^1_0)(f)=b$, etc.

Similarly, in the labeled case, a labeled presheaf $(P,\ell)\in\linear/!$ can be
pictured as a presimplicial set whose vertices ($0$-simplices) are labeled by
elements
\begin{wrapfigure}{r}{15mm}
  \vspace{-4ex}
\[
\begin{tikzpicture}[scale=1.2]
  \fill[fill=lightgray] (0.5,0) -- (0,0.5) -- (0.5,1) -- cycle;
  \ssnode{(.5,1)};
  \draw (.5,1) node[right]{$a$};
  \ssnode{(0,.5)};
  \draw (0,.5) node[left]{$b$};
  \ssnode{(.5,0)};
  \draw (.5,0) node[right]{$c$};
  \sstrans[{node[above left=-0.1]{$ab$}}]{(.5,1)}{(0,.5)};
  \sstrans[{node[below left=-0.1]{$bc$}}]{(0,.5)}{(.5,0)};
  \sstrans[{node[right]{$ac$}}]{(.5,1)}{(.5,0)};
  \draw (0.25,0.5) node{\small $\ abc$};
\end{tikzpicture}
\]
\vspace{-7ex}
\end{wrapfigure}
of~$L$. The word labeling of higher-dimensional simplices can then be deduced
by concatenating the labels of the vertices it has as iterated faces. For
instance, an edge (a $1$-simplex) whose source is labeled by~$a$ and target is
labeled by~$b$ is necessarily labeled by the word $ab$, etc.

More generally, presheaves in $\tlinear$ can be pictured as \emph{augmented
  presimplicial sets} and presheaves in $\linear$ as \emph{augmented simplicial
  sets}, a description of those can for instance be found in Hatcher's book
\emph{Algebraic Topology}.

\section{Proofs of classical propositions}
\label{sec:classical-props}

In this section, we briefly recall proofs of well-known propositions as our
proofs rely on a fine understanding of those. We refer the reader
to~\cite{mac1992sheaves} for further details.

\begin{proposition*}{Proposition~\ref{prop:realization}}
  Given a functor $F:\C\to\D$, with $\D$ cocomplete, the associated
  nerve $N_F:\D\to\hat\C$ admits a left adjoint $R_F:\hat\C\to\D$
  called the \emph{realization} along~$F$. This functor is defined on
  objects~$P\in\hat\C$ by
  \[
  R_F(P)\qeq\colim(\El(P)\xrightarrow{\pi}\C\xrightarrow{F}\D)
  \]
\end{proposition*}
\begin{proof}
  In order to show the adjunction, we have to construct a natural family of
  isormorphisms $\D(R_F(P),D)\cong\hat\C(P,N_FD)$ indexed by a presheaf
  $P\in\hat\C$ and an object $D\in\D$. A natural transformation
  $\theta\in\hat\C(P,N_FD)$ is a family of functions $(\theta_C:PC\to
  N_FDC)_{C\in\C}$ such that for every morphism $f:C'\to C$ in~$\C$ the diagram
  \[
  \xymatrix@R=2ex@C=3ex{
    P(C)\ar[d]_{P(f)}\ar[r]^-{\theta_C}&\D(FC,D)\ar[d]^{\D(Ff,D)}\\
    P(C')\ar[r]_-{\theta_{C'}}&\D(FC',D)
  }
  \]
  commutes. It can also be seen as a family $(\theta_C(p):FC\to
  D)_{(C,p)\in\El(P)}$ of morphisms in $\D$ such that the diagram
  \[
  \vxym{
    FC\ar[dr]^{\theta_C(p)}\\
    &D\\
    \ar[uu]^{Ff}FC'\ar[ur]_{\theta_{C'}(P(f)(p))}\\
  }
  \quad\text{or equivalently}\quad
  \vxym{
    F\pi_P(C,p)\ar[dr]^>>>>{\theta_C(p)}\\
    &D\\
    F\pi_P(C',p')\ar[uu]^{F\pi_Pf}\ar[ur]_>>>>{\theta_{C'}(p')}
  }
  \]
  commutes for every morphism $f:C'\to C$ in~$\C$. This thus defines a cocone
  from $F\pi_P:\El(P)\to\D$ to $D$, and those cocones are in bijection with
  morphisms $R_F(P)\to D$ by definition of $R_F(P)$ as a colimit: we have shown
  $\D(R_F(P),D)\cong\hat\C(P,N_F(D))$, from which we conclude.
\end{proof}

The equivalence between the two conditions of Definition~\ref{def:dense} can be
shown as follows.

\begin{proposition}
  Given a functor $F:\C\to\D$, the two following conditions are equivalent:
  \begin{enumerate}
  \item[(i)] the associated nerve functor $N_F:\D\to\hat\C$ is full and
    faithful,
  \item[(ii)] every object of~$\D$ is canonically a colimit of objects in~$\C$:
    for every $D\in\D$,
    \begin{equation*}
      D\qcong\colim(\El(N_FD)\xrightarrow{\pi}\C\xrightarrow{F}\D)
    \end{equation*}
  \end{enumerate}
\end{proposition}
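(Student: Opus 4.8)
The plan is to prove the equivalence (i)$\iff$(ii) by a direct computation using the realization--nerve adjunction from Proposition~\ref{prop:realization}. First I would invoke that proposition: since $\D$ is cocomplete, the nerve $N_F:\D\to\hat\C$ has a left adjoint $R_F$, with $R_F(P)=\colim(\El(P)\xrightarrow{\pi}\C\xrightarrow{F}\D)$. In particular, for every object $D\in\D$ we have $R_F(N_FD)=\colim(\El(N_FD)\xrightarrow{\pi}\C\xrightarrow{F}\D)$, which is precisely the colimit appearing on the right-hand side of~\eqref{eq:dense-repr}. So condition~(ii) is, up to checking that the canonical comparison map is the isomorphism, the statement that the counit $\varepsilon_D:R_F(N_FD)\to D$ of the adjunction $R_F\dashv N_F$ is an isomorphism for every $D$.

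The key step is then the standard adjunction fact: for an adjunction $R_F\dashv N_F$, the right adjoint $N_F$ is full and faithful if and only if the counit $\varepsilon:R_F N_F\Rightarrow \Id_\D$ is a natural isomorphism. I would prove the direction I need by unwinding the triangle identities: faithfulness and fullness of $N_F$ are together equivalent to the hom-set map $\D(D',D)\to\hat\C(N_FD',N_FD)$ being a bijection for all $D',D$, and under the adjunction bijection $\hat\C(N_FD',N_FD)\cong\D(R_FN_FD',D)$ this map corresponds to postcomposition with $\varepsilon_D$; since the composite is a bijection for all $D'$ (taking $D'$ to range over all objects, in particular $D'=R_FN_FD$ suitably), the Yoneda lemma forces $\varepsilon_D$ to be an isomorphism, and conversely if each $\varepsilon_D$ is invertible the composite is a bijection. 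Finally I would identify the canonical comparison cocone from $F\pi:\El(N_FD)\to\D$ to $D$ — whose component at $(C,p)\in\El(N_FD)$ is the morphism $p:FC\to D$ itself, since an element $p\in N_FD(C)$ \emph{is} a morphism $FC\to D$ — and observe that this is exactly the cocone inducing $\varepsilon_D$, so that $\varepsilon_D$ being an isomorphism is literally the canonical colimit statement~(ii).

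The main obstacle, and the only place requiring care, is matching up the abstract adjunction data with the concrete cocone in~(ii): one must check that the universal cocone defining $R_F(N_FD)$ as a colimit, when compared against $D$ via the tautological cocone $(p:FC\to D)_{(C,p)\in\El(N_FD)}$, produces precisely the counit $\varepsilon_D$, and that naturality in $D$ matches as well. This is a diagram-chase through the proof of Proposition~\ref{prop:realization} (tracking which cocone corresponds to the identity $\Id_{N_FD}\in\hat\C(N_FD,N_FD)$ under $\D(R_FN_FD,D)\cong\hat\C(N_FD,N_FD)$), but it is routine. I would also remark that all of this is classical and refer to~\cite{mac1992sheaves} for the details, as the surrounding text does.
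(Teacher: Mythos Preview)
Your argument is essentially the paper's own: reduce (i)$\iff$(ii) to the classical fact that a right adjoint is full and faithful precisely when the counit is an isomorphism, after identifying the canonical comparison morphism with the counit of $R_F\dashv N_F$. The paper cites \cite[Theorem~IV.3.1]{maclane:cwm} for this step rather than unwinding the triangle identities, but the content is the same.

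There is, however, a genuine gap: you begin with ``since $\D$ is cocomplete'', but the statement does not assume this, and in the paper's intended application ($I:\graph\to\linear$) the target category~$\linear$ is certainly not cocomplete --- indeed, the whole point of the paper is that~$\linear$ lacks many colimits. Without cocompleteness Proposition~\ref{prop:realization} does not apply as stated and the realization functor $R_F$ need not exist globally, so the adjunction $R_F\dashv N_F$ is not available as such. The paper handles this explicitly: it first treats the cocomplete case exactly as you do, and then observes that the argument extends to general~$\D$ by regarding $\colim(\El(-)\xrightarrow{\pi}\C\xrightarrow{F}\D)$ as a \emph{partially defined} left adjoint in the sense of~\cite{pare1973connected}, for which the ``full and faithful $\Leftrightarrow$ counit invertible'' characterization still makes sense and still holds. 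You should add this remark, or else restrict your claim to the cocomplete case.
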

\begin{proof}
  In the case where~$\D$ is cocomplete, the nerve functor $N_F:\D\to\hat\C$ admits
  $R_F:\hat\C\to\D$ as right adjoint, and the equivalence amounts to showing
  that the right adjoint is full and faithful if and only if the counit is an
  isomorphism, which is a classical
  theorem~\cite[Theorem~IV.3.1]{maclane:cwm}. The construction can be adapted to
  the general case where~$\D$ is not necessarily cocomplete by considering
  $\colim(\El(-)\xrightarrow\pi\C\xrightarrow{F}\D):\hat\C\to\D$ as a partially
  defined left adjoint (see~\cite{pare1973connected}) and generalizing the
  theorem.
\end{proof}

\section{Proofs of the construction of the finite cocompletion}
\label{sec:cocomp-proofs}

\begin{proposition*}{Lemma~\ref{lemma:pres-graph}}
  A presheaf $P\in\hat\linear$ preserves finite limits, if and only if it sends
  the colimits of diagrams of the form
  \begin{equation*}
    \El(G)\xrightarrow{\pi_G}\graph\xrightarrow{I}\linear
  \end{equation*}
  to limits in~$\Set$, where $G\in\hat\graph$ is a finite graph such that the
  above diagram admits a colimit. Such a diagram in~$\linear$ is said to be
  \emph{generated by the graph~$G$}.
\end{proposition*}
\begin{proof}
  In order to check that a presheaf $P\in\hat\linear$ preserves finite limits,
  we have to check that it sends colimits of finite diagrams in~$\linear$
  \emph{which admit a colimit} to limits in~$\Set$, and therefore we have to
  characterize diagrams which admit colimits in~$\linear$. The number of
  diagrams to check can be reduced by using the facts that limits commute with
  limits~\cite{maclane:cwm}.
  For instance, the inclusion functor $I:\graph\to\linear$ is dense, which
  implies that every object $n\in\linear$ is canonically a colimit of the
  objects~$1$ and~$2$ by the formula
  $n\cong\colim(\El(N_In)\xrightarrow{\pi}\graph\xrightarrow{I}\linear)$, see
  Definition~\ref{def:dense}. Thus, given a finite diagram $K:\J\to\linear$, we
  can replace any object~$n$ different from~$1$ and~$2$ occurring in the diagram
  by the corresponding diagram
  $\El(N_In)\xrightarrow{\pi}\graph\xrightarrow{I}\linear$, thus obtaining a new
  diagram $K':\J\to\linear$ which admits the same colimit as~$K$.
  This shows that~$P$ will preserve finite limits if and only if it preserves
  limits of finite diagrams in~$\linear$ in which the only occurring objects
  are~$1$ and~$2$. Since the only non-trivial arrows in~$\linear$ between the
  objects $1$ and $2$ are $s_0^1,s_1^1:1\to 2$, and removing an identity arrow
  in a diagram does not change its colimit, the diagram $K$ can thus be
  assimilated to a bipartite graph with vertices labeled by $1$ or~$2$ and edges
  labeled by $s_0^1$ or~$s_1^1$, all edges going from vertices $1$ to vertices
  $2$.

  We can also reduce the number diagrams to check by remarking that some pairs
  of diagrams are ``equivalent'' in the sense that their image under~$P$ have
  the same limit, independently of~$P$. For instance, consider a diagram in
  which an object $2$ is the target of two arrows labeled by $s_0^1$ (on the
  left). The diagram obtained by identifying the two arrows along with the
  objects~$1$ in their source (on the right) can easily be checked to be
  equivalent by constructing a bijection between cocones of the first and
  cocones of the second.
  \[
  \vxym{
    \ar@{}[r]|{\ldots}2&2&&2&&\ar@{}[r]|{\ldots}2&2\\
    \ar@{.>}[urrr]\ar@{}[r]|{\ldots}1\ar@{.>}[u]\ar@{.>}[ur]&\ar@{.>}[u]\ar@{.>}[ul]1\ar@{.>}[urr]&\ar@{.>}[ull]\ar@{.>}[ul]1\ar[ur]_<<{s_0^1}&&1\ar[ul]^<<{s_0^1}\ar@{.>}[ur]\ar@{.>}[urr]&\ar@{.>}[ull]\ar@{}[r]|{\ldots}1\ar@{.>}[u]\ar@{.>}[ur]&\ar@{.>}[u]\ar@{.>}[ul]1\ar@{.>}[ulll]\\
  }
  \qquad\qquad\qquad
  \vxym{
    \ar@{}[r]|{\ldots}2&2&2&\ar@{}[r]|{\ldots}2&2&&\\
    \ar@{.>}[urr]\ar@{}[r]|{\ldots}1\ar@{.>}[u]\ar@{.>}[ur]&\ar@{.>}[u]\ar@{.>}[ul]1\ar@{.>}[ur]&\ar@{.>}[ull]\ar@{.>}[ul]1\ar[u]_{s_0^1}\ar@{.>}[ur]\ar@{.>}[urr]&\ar@{.>}[ul]\ar@{}[r]|{\ldots}1\ar@{.>}[u]\ar@{.>}[ur]&\ar@{.>}[u]\ar@{.>}[ul]1\ar@{.>}[ull]\\
  }
  \]
  More precisely, if we write $K:\J'\to\linear$ and $K:\J\to\linear$ for the two
  diagrams and $J:\J'\to\J$ for the obvious functor, the canonical arrow
  $\colim(K\circ J)\to\colim(K)$ is an isomorphism, \ie the functor $J$ is final.
  The same reasoning of course also holds with $s_1^1$ instead of $s_0^1$. We
  can therefore restrict ourselves to considering diagrams in which $2$ is the
  target of at most one arrow $s_0^1$, and of at most one arrow
  $s_1^1$. Conversely, if an object~$2$ is the target of no arrow $s_0^1$ (on
  the left), then we can add a new object $1$ and a new arrow from this object
  to the object $2$ (on the right) and obtain an equivalent diagram:
  \[
  \vxym{
    2&\ar@{}[r]|{\ldots}2&2&&\\
    &\ar@{.>}[ul]\ar@{}[r]|{\ldots}1\ar@{.>}[u]\ar@{.>}[ur]&\ar@{.>}[ull]1\ar@{.>}[u]\ar@{.>}[ul]\\
  }
  \qquad\quad
  \vxym{
    2&\ar@{}[r]|{\ldots}2&2&&\\
    1\ar[u]^{s_0^1}&\ar@{.>}[ul]\ar@{}[r]|{\ldots}1\ar@{.>}[u]\ar@{.>}[ur]&\ar@{.>}[ull]1\ar@{.>}[u]\ar@{.>}[ul]\\
  }
  \]
  The same reasoning holds with~$s_1^1$ instead of~$s_0^1$ and we can therefore
  restrict ourselves to diagrams in which every object~$2$ is the target of
  exactly one arrow~$s_0^1$ and one arrow~$s_1^1$.

  Any such diagram~$K$ is obtained by gluing a finite number of diagrams of the
  form
  \[
  \vxym{
    &2&\\
    1\ar[ur]^{s_1^1}&&1\ar[ul]_{s_0^1}\\
  }
  \]
  along objects~$1$, and is therefore of the form
  $\El(G)\xrightarrow\pi\graph\xrightarrow{I}\linear$ for some finite graph
  $G\in\hat\graph$: the objects of $G$ are the objects $1$ in $K$, the edges of
  $G$ are the objects $2$ in $K$ and the source and target of an edge~$2$ are
  respectively given by the sources of the corresponding arrows $s_1^1$ and $s_0^1$
  admitting it as target. For instance, the diagram on the left
  \[
  \vxym{
    &2&&2&&2&\\
    1\ar[ur]^>{s_1^1}&&1\ar[ul]_>{s_0^1}\ar[ur]^>{s_1^1}\ar[drr]_{s_1^1}&&1\ar[ul]_>{s_0^1}\ar[ur]^>{s_1^1}&&1\ar[dll]^{s_0^1}\ar[ul]_>{s_0^1}\\
    &&&&2&\\
  }
  \qquad\qquad\qquad
  \vxym{
    0\ar[r]&1\ar[r]\ar@/_/[rr]&2\ar[r]&3
  }
  \]
  is generated by the graph on the right.
\end{proof}

\begin{proposition*}{Lemma~\ref{lemma:diag-colim}}
  Given a graph~$G\in\hat\graph$, the associated diagram~\eqref{eq:diag-graph}
  admits a colimit in~$\linear$ if and only if there exists $n\in\linear$ and a morphism
  $f:G\to N_In$ in $\hat\linear$ such that every morphism $g:G\to N_Im$ in
  $\hat\linear$, with $m\in\linear$, factorizes uniquely through~$N_In$:
  \[
  \vxym{
    G\ar[r]^f\ar@/_/[rr]_g&N_In\ar@{.>}[r]&N_Im
  }
  \]
\end{proposition*}
\begin{proof}
  We have seen in proof of Proposition~\ref{prop:realization} that morphisms in
  $\hat\linear(G,N_In)$ are in bijection with cocones in~$\linear$ from
  $\El(G)\xrightarrow{\pi_G}\graph\xrightarrow{I}\linear$ to~$n$, and moreover
  given a morphism $h:n\to m$ in $\graph$ the morphism
  $\hat\linear(G,N_In)\to\hat\linear(G,N_Im)$ induced by post-composition with
  $N_Ih$ is easily checked to correspond to the usual notion of morphism between
  $n$\nbd{}cocones and $m$-cocones induced by $N_Ih$ (every morphism $N_In\to
  N_Im$ is of this form since $N_I$ is full and faithful). We can finally
  conclude using the universal property defining colimiting cocones.
\end{proof}

\section{Proofs for deletions of lines}
\label{sec:partial-proofs}

In this section, we detail proofs of properties mentioned in
Section~\ref{sec:extensions}.

\subsection{Sets and partial functions}
Before considering the conservative finite cocompletion of the
category~$\lines$, as introduced in Definition~\ref{def:linear}, it is
enlightening to study the category $\PSet$ of sets and partial functions. A
partial function $f:A\to B$ can always be seen
\begin{enumerate}
\item as a total function $f:A\to B\uplus\set{\bot_A}$ where $\bot_A$ is a fresh
  element \wrt $A$, where given $a\in A$, $f(a)=\bot_A$ means that the partial
  function is undefined on~$A$,
\item alternatively, as a total function $f:A\uplus\set{\bot_A}\to
  B\uplus\set{\bot_B}$ such that $f(\bot_A)=\bot_B$.
\end{enumerate}

\noindent
This thus suggests to consider the following category:

\begin{definition}
  The category~$\pSet$ of \emph{pointed sets} has pairs $(A,a)$ where $A$ is a
  set and $a\in A$ as objects, and morphisms $f:(A,a)\to(B,b)$ are (total)
  functions $f:A\to B$ such that $f(a)=b$.
\end{definition}

\noindent
Point (ii) of the preceding discussion can be summarized by saying that a
partial function can be seen as a pointed function and conversely:

\begin{proposition}
  The category~$\PSet$ of sets and partial functions is equivalent to the
  category~$\pSet$ of pointed sets.
\end{proposition}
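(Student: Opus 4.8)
The plan is to exhibit an equivalence of categories $\PSet \simeq \pSet$ by constructing functors in both directions and checking they are mutually (naturally) inverse. The key construction is the ``add a basepoint'' functor, sometimes written $(-)_+$.

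First I would define a functor $L : \PSet \to \pSet$. On objects, $L(A) = (A \uplus \set{\bot_A}, \bot_A)$, adjoining a fresh element $\bot_A$ to serve as the basepoint. On a partial function $f : A \to B$, I set $L(f) : A \uplus \set{\bot_A} \to B \uplus \set{\bot_B}$ to be the total function sending $\bot_A \mapsto \bot_B$, sending $a \mapsto f(a)$ whenever $f$ is defined at $a$, and sending $a \mapsto \bot_B$ otherwise; this is exactly point~(ii) of the discussion above. One checks $L$ preserves identities and composition: the composite of partial functions is undefined exactly when one of the stages is undefined, which matches the behaviour of the basepoint-preserving composite in $\pSet$.

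Next I would define a functor $R : \pSet \to \PSet$ going back. On objects, $R(A,a) = A \setminus \set{a}$. On a morphism $f : (A,a) \to (B,b)$ (so $f(a) = b$), I set $R(f) : A \setminus \set{a} \to B \setminus \set{b}$ to be the partial function defined at $x \in A \setminus \set{a}$ precisely when $f(x) \neq b$, with value $f(x)$ there. Again functoriality is a routine check. Then I would construct natural isomorphisms $R \circ L \cong \Id_{\PSet}$ and $L \circ R \cong \Id_{\pSet}$. The first is essentially an identity: $R(L(A)) = (A \uplus \set{\bot_A}) \setminus \set{\bot_A} = A$ on the nose, and $R(L(f)) = f$ for every partial function $f$. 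The second requires a genuine (but canonical) natural transformation: $L(R(A,a)) = \big((A \setminus \set{a}) \uplus \set{\bot}, \bot\big)$, and the evident bijection sending $\bot \mapsto a$ and fixing everything else is an isomorphism of pointed sets, natural in $(A,a)$.

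The only mild subtlety — and the step I expect to need the most care — is the naturality of $L \circ R \cong \Id$, since one must verify that a morphism $f:(A,a)\to(B,b)$ is sent by $L\circ R$ to the function that agrees with $f$ after transporting along these basepoint-renaming bijections; this boils down to unwinding that $f$ maps $a$ to $b$ and so the ``undefined'' cases introduced by $R$ are exactly re-filled by $L$ in a way compatible with $f$. Everything else is bookkeeping. Alternatively, and perhaps more cleanly, one can avoid choosing $R$ at all: observe that $L$ is fully faithful (a partial function $A\to B$ is the same data as a basepoint-preserving function $L(A)\to L(B)$, by construction) and essentially surjective (any pointed set $(B,b)$ is isomorphic in $\pSet$ to $L(B\setminus\set{b})$ via the basepoint-renaming bijection), whence $L$ is an equivalence by the standard characterisation.
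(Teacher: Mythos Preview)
Your proposal is correct and follows exactly the approach the paper suggests: the paper does not spell out a proof but simply invokes point~(ii) of the preceding discussion (a partial function $A\to B$ is the same as a total basepoint-preserving function $A\uplus\{\bot_A\}\to B\uplus\{\bot_B\}$), and your functor $L$ is precisely that construction, with the remaining verifications being the routine bookkeeping you describe.
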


It is easily shown that the forgetful functor $U:\pSet\to\Set$, sending a
pointed set $(A,a)$ to the underlying set~$A$, admits a left adjoint
$F:\Set\to\pSet$, defined on objects by $FA=(A\uplus\set{\bot_A},\bot_A)$. This
adjunction induces a monad $T=UF$ on~$\Set$, from which point (i) can be
formalized:

\begin{proposition}
  The category~$\PSet$ is equivalent to the Kleisli category~$\Set_T$ associated
  to the monad $T:\Set\to\Set$.
\end{proposition}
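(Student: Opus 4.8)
The plan is to construct a direct isomorphism of categories $E\colon\Set_T\to\PSet$; being an isomorphism, it is in particular an equivalence, which is all that is claimed. The first step is to make the monad $T=UF$ explicit. Since $FA=(A\uplus\set{\bot_A},\bot_A)$ and $U$ merely forgets the basepoint, we have $TA=A\uplus\set{\bot_A}$; the unit $\eta_A\colon A\to TA$ is the coprojection; and, by the standard formula $\mu=U\varepsilon F$, the multiplication $\mu_A\colon TTA\to TA$ acts as the identity on the copy of $A\uplus\set{\bot_A}$ sitting inside $TTA=(A\uplus\set{\bot_A})\uplus\set{\bot_{TA}}$ and sends the outer point $\bot_{TA}$ to $\bot_A$. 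In other words, $T$ is the familiar ``maybe'' monad.

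Next I would define $E$ to be the identity on objects and to send a Kleisli morphism $f\colon A\to TB$ to the partial function $Ef\colon A\to B$ which is defined at $a\in A$ precisely when $f(a)\neq\bot_B$, with value $Ef(a)=f(a)$ in that case. This is exactly point~(i) of the discussion above, read in the other direction, and it already shows that $E$ induces a bijection $\Set_T(A,B)=\Set(A,TB)\cong\PSet(A,B)$ for every pair of sets $A,B$, the inverse bijection sending a partial function to the total function collapsing its undefined points to $\bot_B$. Hence $E$ will automatically be bijective on objects and fully faithful, and the only remaining task is to check that $E$ is a functor.

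Functoriality on identities is immediate, since the identity of $A$ in $\Set_T$ is $\eta_A$, whose image under $E$ is the everywhere-defined identity $\id_A$ of $\PSet$. For composites, take $f\colon A\to TB$ and $g\colon B\to TC$ in $\Set_T$, whose Kleisli composite is $\mu_C\circ Tg\circ f$, and chase an element $a\in A$. If $f(a)=\bot_B$, then $Tg(\bot_B)=\bot_{TC}$ and $\mu_C(\bot_{TC})=\bot_C$, so the composite is undefined at $a$; if $f(a)=b\in B$, then $Tg(b)=g(b)\in TC$ and $\mu_C(g(b))=g(b)$ because $\mu_C$ restricts to the identity on $TC\subseteq TTC$. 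Comparing, $E(\mu_C\circ Tg\circ f)$ is defined at $a$ iff $Ef$ is defined at $a$ with some value $b$ and $Eg$ is defined at $b$, in which case its value is $Eg(b)$ --- which is precisely $(Eg)\circ(Ef)$ in $\PSet$. This establishes that $E$ is an isomorphism of categories, whence $\PSet\simeq\Set_T$.

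I do not expect a genuine obstacle here; the one place calling for care is keeping the two adjoined basepoints distinct --- $\bot_B\in TB$ versus $\bot_{TB}\in TTB$ --- when checking that Kleisli composition reproduces composition of partial functions. As an alternative to the hands-on argument, one could instead note that the Eilenberg--Moore category of $T$ is exactly $\pSet$, that $\Set_T$ is its full subcategory on the free algebras $FA$, and that every pointed set $(A,a)$ is isomorphic in $\pSet$ to the free one $F(A\setminus\set{a})$; then the inclusion $\Set_T\hookrightarrow\pSet$ is essentially surjective and fully faithful, and composing with the equivalence $\pSet\simeq\PSet$ of the previous proposition yields the statement.
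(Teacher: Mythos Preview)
Your argument is correct. The paper itself offers no proof of this proposition: it is stated as a well-known fact, with the only justification being the preceding informal remark~(i) that a partial function $A\to B$ is the same thing as a total function $A\to B\uplus\{\bot_B\}$. Your construction of the identity-on-objects functor $E$ is precisely the formalization of that remark, and your element chase verifying that Kleisli composition matches composition of partial functions is the standard (and only real) thing to check; the care you take in distinguishing $\bot_B$ from $\bot_{TB}$ is exactly right. The alternative route via the Eilenberg--Moore category and the previous proposition is also sound and is a nice complement, though the paper does not take it.
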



Finally, it turns out that the category~$\pSet$ of pointed sets might have been
discovered from~$\PSet$ using ``presheaf thinking'' as follows.
%
%
We write $\G$ for the full subcategory of~$\PSet$ containing two objects: the
empty set~$0=\emptyset$ and a set~$1=\set{\ast}$ with only one element, and two
non-trivial arrows $\star:0\to 1$ and $\bot:1\to 0$ (the undefined function)
such that $\bot\circ\star=\id_0$. We write $I:\G\to\PSet$ for the inclusion
functor. Consider the associated nerve functor $N_I:\PSet\to\hat\G$. Given a set
$A$ the presheaf $N_IA\in\hat\G$ is such that:
\begin{itemize}
\item $N_IA0=\PSet(I0,A)\cong\set{\star}$: the only morphism $0\to A$ in~$\PSet$
  is noted $\star$,
\item $N_IA1=\PSet(I1,A)\cong A\uplus\set{\bot_A}$: a morphism $1\to A$ is
  characterized by the image of $*\in A$ which is either an element of $A$ or
  undefined,
\item $N_IA\star:N_IA1\to N_IA0$ is the constant function whose image is
  $\star$,
\item $N_IA\bot:N_IA0\to N_IA1$ is the function such that the image of $\star$
  is $\bot_A$.
\end{itemize}
Moreover, given $A,B\in\PSet$ a natural transformation from $N_IA$ to $N_IB$ is
a pair of functions $f:A\uplus\set{\bot_A}\to B\uplus\set{\bot_B}$ and
$g:\set{\star}\to\set{\star}$ such that the diagrams
\[
\vxym{
  A\uplus\set{\bot_A}\ar[r]^f\ar[d]_{N_IA\star}&\ar[d]^{N_IB\star}B\uplus\set{\bot_B}\\
  \set{\star}\ar[r]_g&\set{\star}
}
\qquad\text{and}\qquad
\vxym{
  A\uplus\set{\bot_A}\ar[r]^f&\uplus\set{\bot_B}\\
  \set{\star}\ar[r]_g\ar[u]^{N_IA\bot}&\ar[u]_{N_IB\bot}\set{\star}
}
\]
commutes. Since $\set\star$ is the terminal set, such a natural transformation
is characterized by a function $f:A\uplus\set{\bot_A}\to B\uplus\set{\bot_B}$
such that $f(\bot_A)=\bot_B$. The functor $N_I:\PSet\to\hat\G$ is thus dense and
its image is equivalent to $\pSet$.

\subsection{A cocompletion of~$\linear$}
The situation with regards to the category~$\linear$ is very similar. We follow
the plan of Section~\ref{sec:cocompl} and first investigate the unlabeled case:
$\linear$ is the category with integers as objects and partial injective
increasing functions $f:[m]\to[n]$ as morphisms $f:m\to n$.

We write~$\G$ for the full subcategory of $\linear$ whose objects are $0$, $1$
and $2$. This is the free category on the graph
\[
\vxym{
  0\ar@<+.5ex>[rr]^{s_0^0}&&\ar@<+.5ex>[ll]^{d_0^0}1\ar@<+2.5ex>[rr]^{s^1_0}\ar@<+1ex>[rr]|{s^1_1}&&\ar@<+1ex>[ll]|{d^1_0}\ar@<+2.5ex>[ll]^{d^1_1}2
}
\]
subject to the relations
\begin{equation}
  \label{eq:g-ax}
  s^1_0s^0_0=s^1_1s^0_0
  \quad
  d^0_0s^0_0=\id_1
  \quad
  d^1_0s^1_0=\id_2
  \quad
  d^1_1s^1_1=\id_2
  \quad
  d^0_0d^1_0=d^0_0d^1_1
\end{equation}
(see Proposition~\ref{prop:presimp-free-cat}). We write $I:\G\to\linear$ for the
embedding and consider the associated nerve
functor~$N_I:\linear\to\hat\G$. Suppose given an object $n\in\linear$, the
associated presheaf $N_In$ can be described as follows. Its sets are
\begin{itemize}
\item $N_In0=\linear(I0,n)\cong\set{\star}$,
\item $N_In1=\linear(I1,n)\cong [n]\uplus\set{\bot}$,
\item $N_In2=\linear(I2,n)\cong\\
  \setof{(i,j)\in[n]\times [n]}{i<j}\uplus\setof{(\bot,i)}{i\in[n]}\uplus\setof{(i,\bot)}{i\in[n]}\uplus\set{(\bot,\bot)}$:
  a partial function $f:2\to n$ is characterized by the pair of images
  $(f(0),f(1))$ of $0,1\in[n]$, where $\bot$ means undefined.
\end{itemize}
and morphisms are
\begin{itemize}
\item $N_Ins_0^0:N_In1\to N_In0$ is the constant function whose image is $\star$,
\item $N_Ind_0^0:N_In0\to N_In1$ is the function whose image is $\bot$,
\item $N_Ins_0^1:N_In2\to N_In1$ is the second projection,
\item $N_Ins_1^1:N_In2\to N_In1$ is the first projection,
\item $N_Ind^1_0:N_In1\to N_In 2$ sends $i\in[n]\uplus\set{\bot}$ to $(\bot,i)$
\item $N_Ind^1_1:N_In1\to N_In 2$ sends $i\in[n]\uplus\set{\bot}$ to $(i,\bot)$
\end{itemize}
Such a presheaf can be pictured as a graph with $N_In1$ as set of vertices,
$N_In2$ as set of edges, source and target being respectively given by the
functions $N_Ins_1^1$ and $N_Ins_0^1$:
\[
\begin{tikzpicture}[scale=1,baseline={(0,.75)}]
  \ssnode{(2,2)};
  \draw (2,2.5) node {$\bot$};
  \draw[->,shorten <=2pt,shorten >=2pt](2,2) .. controls (1.8,2.2) .. (2,2.3) .. controls (2.2,2.2) .. (2,2);
  \draw[->,shorten <=2pt,shorten >=2pt](2,2) .. controls (.8,1) .. (0,0);
  \draw[->,shorten <=2pt,shorten >=2pt](0,0) .. controls (1.2,1) .. (2,2);
  \draw[->,shorten <=2pt,shorten >=2pt](2,2) .. controls (1.45,1) .. (1,0);
  \draw[->,shorten <=2pt,shorten >=2pt](1,0) .. controls (1.65,1) .. (2,2);
  \draw[->,shorten <=2pt,shorten >=2pt](2,2) .. controls (1.9,1) .. (2,0);
  \draw[->,shorten <=2pt,shorten >=2pt](2,0) .. controls (2.1,1) .. (2,2);
  \draw[->,shorten <=2pt,shorten >=2pt](2,2) .. controls (2.8,1) .. (4,0);
  \draw[->,shorten <=2pt,shorten >=2pt](4,0) .. controls (3.2,1) .. (2,2);
  \ssnode{(0,0)};
  \draw (0,-.5) node{$0$};
  \ssnode{(1,0)};
  \draw (1,-.5) node{$1$};
  \ssnode{(2,0)};
  \draw (2,-.5) node{$2$};
  \draw (3,0) node {$\ldots$};
  \ssnode{(4,0)};
  \draw (4,-.5) node{$n-1$};
  \sstrans{(0,0)}{(1,0)};
  \sstrans{(1,0)}{(2,0)};
  \draw[->,shorten <=2pt,shorten >=2pt](0,0) .. controls (1,-.25) .. (2,0);
  \draw[->,shorten <=2pt,shorten >=2pt](0,0) .. controls (2,-.5) .. (4,0);
  \draw[->,shorten <=2pt,shorten >=2pt](1,0) .. controls (2.5,-.37) .. (4,0);
  \draw[->,shorten <=2pt,shorten >=2pt](2,0) .. controls (3,-.25) .. (4,0);
\end{tikzpicture}
\]
Its vertices are elements of $[n]\uplus\set{\bot}$ and edges are of the form
\begin{itemize}
\item $i\to j$ with $i,j\in[n]$ such that $i<j$,
\item $i\to\bot$ for $i\in[n]$
\item $\bot\to i$ for $i\in[n]$
\item $\bot\to\bot$
\end{itemize}
Morphisms are usual graphs morphisms which preserve the vertex~$\bot$. We are
thus naturally lead to define the following categories of pointed graphs and
graphs with partial functions. We recall that a graph $G=(V,s,t,E)$ consists of
a set $V$ of vertices, a set $E$ of edges and two functions $s,t:E\to V$
associating to each edge its source and its target respectively.

\begin{definition}
  We define the category $\pGraph$ of \emph{pointed graphs} as the category
  whose objects are pairs $(G,x)$ with $G=(V,E)$ and $x\in V$ such that for
  every vertex there is exactly one edge from and to the distinguished
  vertex~$x$, and morphisms~$f:G\to G'$ are usual graph morphisms consisting of
  a pair $(f_V,f_E)$ of functions $f_V:V_G\to V_{G'}$ and $f_E:E_G\to E_{G'}$
  such that for every edge $e\in E_G$, $f_V(s(e))=s(f_E(e))$ and
  $f_V(t(e))=t(f_E(e))$, which are such that the distinguished vertex is
  preserved by $f_V$.
\end{definition}

\begin{definition}
  We define the category $\PGraph$ of \emph{graphs and partial morphisms} as the
  category whose objects are graphs and morphisms $f:G\to G'$ are pairs
  $(f_V,f_E)$ of partial functions $f_V:V_G\to V_{G'}$ and $f_E:E_G\to E_{G'}$
  such that
  \begin{itemize}
  \item for every edge $e\in E_G$ such that $f_E(e)$ is defined, $f_V(s(e))$ and
    $f_V(t(e))$ are both defined and satisfy $f_V(s(e))=s(f_E(e))$ and
    $f_V(t(e))=t(f_E(e))$,
  \item for every edge $e\in E_G$ such that $f_V(s(e))$ and $f_V(t(e))$ are both
    defined, $f_E(e)$ is also defined.
  \end{itemize}
  More briefly: a morphism is defined on an edge if and only it is defined on
  its source and on its target.
\end{definition}

\noindent
Similarly to previous section, a partial morphism of graph can be seen as a
pointed morphism of graph and conversely:

\begin{proposition}
  \label{prop:pgraph-eq}
  The categories $\pGraph$ and $\PGraph$ are equivalent.
\end{proposition}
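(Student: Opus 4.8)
The plan is to exhibit two mutually inverse functors (up to natural isomorphism) between $\pGraph$ and $\PGraph$, mimicking closely the equivalence $\pSet\simeq\PSet$ already established in Proposition~\ref{prop:pgraph-eq}'s set-level analogue. First I would define a functor $E:\pGraph\to\PGraph$ (``erase the basepoint''): given a pointed graph $(G,x)$ with $G=(V,E)$, let $E(G,x)$ be the graph $(V\setminus\{x\},E\setminus(\text{edges incident to }x))$ — note that by the defining condition on pointed graphs, the edges incident to $x$ are exactly one per vertex in each direction plus the loop at $x$, so removing them is well-behaved and leaves precisely the ``honest'' edges. On a pointed morphism $f=(f_V,f_E)$, which preserves $x$, I would define $E(f)$ to be the partial function that is undefined exactly where $f$ lands on the basepoint: $E(f)_V(v)$ is $f_V(v)$ if $f_V(v)\neq x'$ and undefined otherwise, and similarly for $E(f)_E$. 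The edge/source/target compatibility condition defining $\PGraph$ (defined on an edge iff defined on both endpoints) follows from the pointed-graph axiom that each vertex has a unique edge to and from the basepoint, so an edge is sent to the basepoint-incident region precisely when one of its endpoints is.

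Conversely I would define $A:\PGraph\to\pGraph$ (``adjoin a basepoint''): given a graph $G=(V,s,t,E)$, set $A(G)=(V\uplus\{\bot_V\},\,E\uplus\{\bot_V\to v, v\to\bot_V : v\in V\}\uplus\{\bot_V\to\bot_V\})$ with distinguished vertex $\bot_V$; this clearly satisfies the ``exactly one edge to and from the basepoint per vertex'' condition. On a partial morphism $f$, define $A(f)$ to send an undefined value to the new basepoint and a defined value to its image; the condition defining $\PGraph$ morphisms is exactly what is needed to check $A(f)$ is a well-defined graph morphism (an edge goes to a basepoint-incident edge or loop iff a source or target is undefined), and preservation of the basepoint is immediate. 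Functoriality of both $E$ and $A$ is a routine check on composition and identities.

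Finally I would verify that $E\circ A\cong\Id_{\PGraph}$ and $A\circ E\cong\Id_{\pGraph}$. The first is essentially an identity: adjoining $\bot$ and then erasing it returns $G$ on the nose (up to the canonical renaming of vertices), and the action on morphisms matches because ``send undefined to $\bot$'' followed by ``call $\bot$ undefined'' is the identity operation on partial functions. For the second, given $(G,x)$, erasing $x$ and re-adjoining a fresh basepoint $\bot$ yields a graph isomorphic to $G$ via the bijection identifying $\bot$ with $x$; this isomorphism is natural in $(G,x)$ because pointed morphisms preserve $x$ by definition. Both natural isomorphisms are thus built from the evident ``rename the basepoint'' bijections.

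The main obstacle, such as it is, lies in getting the bookkeeping of the basepoint-incident edges exactly right: one must use the pointed-graph axiom (exactly one edge in each direction between every vertex and $x$, including a loop at $x$ itself) to see that $E$ is well-defined and that $A$ lands in $\pGraph$, and to see that the ``defined on an edge iff defined on both endpoints'' clause corresponds precisely to morphisms preserving the basepoint. Once this correspondence is pinned down, everything else is the same formal argument as in the $\PSet\simeq\pSet$ case recalled earlier in this section, and indeed the whole proposition is the natural ``graph-level'' upgrade of that one.
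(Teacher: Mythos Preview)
Your proposal is correct and is exactly the argument the paper has in mind: the paper does not give a proof at all, writing only ``Similarly to previous section, a partial morphism of graph can be seen as a pointed morphism of graph and conversely'' before stating the proposition, so your erase-basepoint/adjoin-basepoint pair of functors is precisely the intended fill-in of that analogy. The one point worth tightening is the action of your functor $A$ on edges: when $f_E(e)$ is undefined you do not send $e$ to ``the basepoint'' but to the unique basepoint-incident edge determined by which of $s(e),t(e)$ are undefined (you acknowledge this in passing, and the $\PGraph$ axiom guarantees this case analysis is exhaustive); also, you have overloaded the letter $E$ for both the edge set and the functor, which you may want to rename.
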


\noindent
Now, notice that the category~$\linear$ is isomorphic to the full subcategory of
$\PGraph$ whose objects are the graphs whose set of objects is $[n]$ for some
$n\in\N$, and such that there is an edge $i\to j$ precisely when $i<j$. Also
notice that the full subcategory of $\pGraph$ whose objects are the
graphs~$N_In$ (with~$\bot$ as distinguished vertex) with~$n\in\N$ is isomorphic
to the full subcategory of $\hat\G$ whose objects are the $N_In$ with
$n\in\N$. And finally, the two categories are equivalent via the isomorphism of
Proposition~\ref{prop:pgraph-eq}. From this, we immediately deduce that the
functor $N_I:\linear\to\hat\G$ is full and faithful, \ie

\begin{proposition}
  The functor $I:\graph\to\linear$ is dense.
\end{proposition}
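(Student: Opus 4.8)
The plan is to unfold Definition~\ref{def:dense}: the functor $I$ is dense precisely when the nerve $N_I:\linear\to\hat\G$ is full and faithful, so this is what I would establish, and essentially all the ingredients are already in place in the discussion preceding the statement. First I would record the explicit description of $N_In$ worked out in the itemized lists above: a presheaf on $\G$ is a graph (vertices $N_In1$, edges $N_In2$, source and target $N_Ins_1^1$ and $N_Ins_0^1$) equipped with extra structure coming from the object $0$ and the arrows $s_0^0,d_0^0,d_0^1,d_1^1$, and the relations~\eqref{eq:g-ax} force this structure to be exactly a choice of distinguished vertex $\bot$ together with a unique edge into and out of it from every vertex. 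Concretely $N_In$ is the graph with vertex set $[n]\uplus\set{\bot}$, an edge $i\to j$ iff $i<j$ in $[n]$, edges $i\to\bot$ and $\bot\to i$ for each $i$, and a loop $\bot\to\bot$, and natural transformations $N_Im\to N_In$ are exactly the graph morphisms preserving $\bot$. Hence $N_I$ corestricts to the full subcategory of $\hat\G$ whose objects are the $N_In$, and that subcategory is isomorphic to the full subcategory of $\pGraph$ whose objects are the pointed graphs $(N_In,\bot)$.

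Next I would invoke the two remaining comparisons set up just above. First, $\linear$ is isomorphic to the full subcategory of $\PGraph$ whose objects are the graphs on vertex set $[n]$ with an edge $i\to j$ precisely when $i<j$: this is immediate from the definition of $\linear$, since a partial injective increasing function $[m]\to[n]$ is exactly a partial morphism between two such graphs. Second, under the equivalence $\PGraph\simeq\pGraph$ of Proposition~\ref{prop:pgraph-eq} such a graph on $[n]$ is sent to the pointed graph $(N_In,\bot)$ — the ``pointing'' construction adjoins precisely the fresh vertex $\bot$ and the edges forced to be incident to it — so this equivalence restricts to an equivalence between the copy of $\linear$ inside $\PGraph$ and the full subcategory of $\pGraph$ on the $(N_In,\bot)$. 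One checks that the resulting composite equivalence $\linear\simeq\{(N_In,\bot)\}_{\pGraph}$ acts by $n\mapsto N_In$ on objects and sends a partial increasing injection to the pointed graph morphism extending it by $\bot\mapsto\bot$, i.e. it is $N_I$ followed by the isomorphism of the first paragraph.

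Putting the pieces together, $N_I$ is (up to the isomorphism of the first paragraph) the composite $\linear\simeq\{(N_In,\bot)\}_{\pGraph}\cong\{N_In\}_{\hat\G}\hookrightarrow\hat\G$ of an equivalence, an isomorphism of categories, and a full subcategory inclusion; being a composite of full and faithful functors, $N_I$ is full and faithful, so $I$ is dense. The only load-bearing computations are the explicit form of $N_In$ and of natural transformations between such presheaves (already carried out above) and the verification that the equivalence of Proposition~\ref{prop:pgraph-eq} matches the pointing of the graph on $[n]$ with edges $i\to j$ for $i<j$ with the presheaf $N_In$. I expect the main — but entirely routine — obstacle to be the bookkeeping that all these identifications are mutually compatible on morphisms, so that the composite is literally $N_I$ rather than merely naturally isomorphic to it; the latter would in any case already suffice, since fullness and faithfulness are preserved by natural isomorphism.
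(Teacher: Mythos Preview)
Your proposal is correct and follows essentially the same route as the paper: both arguments identify $\linear$ with a full subcategory of $\PGraph$, identify the essential image of $N_I$ with the corresponding full subcategory of $\pGraph$, and then invoke the equivalence of Proposition~\ref{prop:pgraph-eq} to conclude that $N_I$ is full and faithful. Your version is simply more explicit about the bookkeeping (checking that the composite really is $N_I$, or at least naturally isomorphic to it), which the paper leaves implicit.
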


We can now follow Section~\ref{sec:cocompl} step by step, adapting each
proposition as necessary. The conditions satisfied by presheaves
in~$\concurrent$ introduced in Proposition~\ref{prop:cons-cocompl} are still
valid in our new case:

\begin{proposition}
  \label{prop:plinear-cond}
  Given a presheaf $P\in\hat\linear$ which is an object of $\concurrent$,
  \begin{enumerate}
  \item the underlying graph of $P$ is finite,
  \item for each non-empty path $x\twoheadrightarrow y$ there exists exactly one
    edge $x\to y$,
  \item $P(n+1)$ is the set of paths of length~$n$ in the underlying graph
    of~$P$, and~$P(0)$ is reduced to one element.
  \end{enumerate}
\end{proposition}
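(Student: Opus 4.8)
The plan is to adapt, essentially verbatim, the proof of Proposition~\ref{prop:cons-cocompl} from the insertion-only case, since the combinatorial content is identical once we know the relevant pushouts in $\linear$ coincide with those of $\tlinear$ on the subdiagrams that matter. First I would recall that, by Theorem~\ref{thm:cons-cocomp}, a presheaf $P\in\hat\linear$ lies in $\concurrent$ exactly when it preserves finite limits, i.e.\ sends colimits of finite diagrams in $\linear$ to limits in $\Set$. So for each of the three claimed properties it suffices to exhibit a suitable colimiting cocone in $\linear$ and read off the constraint it imposes on $P$.

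For property~(iii), the key observation is that the object $n+1$ is still the colimit in $\linear$ of the zig-zag diagram~\eqref{eq:pres-lin} built from copies of $1$ and $2$ and the arrows $s_0^1,s_1^1$: adding the deletion morphisms $d_i^n$ to the category does not create new cocones under this diagram, because any cocone must already be compatible with the $s$-maps, and the $d$-maps are determined (and constrained) by the relations~\eqref{eq:simpl-ax}. Hence $P(n+1)\cong\lim P(\eqref{eq:pres-lin}^\op)$, which is exactly the set of length-$n$ paths in the underlying graph $I^*P$, and $P(0)\cong\{\ast\}$ since $0$ is still initial in $\linear$ (it is the colimit of the empty diagram). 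For property~(ii), I would use that $n+1$ is also the colimit of the diagram~\eqref{eq:pres-trans} with an extra object $2$ glued along the two extreme vertices; transporting this to a pullback in $\Set$ forces, for every non-empty path $x\twoheadrightarrow y$, a bijection identifying the path with a unique edge $x\to y$. For property~(i), finiteness of $P(1)$ and $P(2)$ follows because density of $I:\graph\to\linear$ in the new setting (established in the appendix: Proposition stating $I$ is dense) makes every object of $\concurrent$ a finite colimit of representables $N_I(1)$, $N_I(2)$, so the sets $P(1)$ and $P(2)$ are finite, and by~(ii) there is at most one edge between any two vertices.

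The main obstacle I anticipate is verifying that the diagrams~\eqref{eq:pres-lin} and~\eqref{eq:pres-trans}, which were shown to be colimiting in $\tlinear$, remain colimiting in the larger category $\linear$ containing the deletion morphisms $d_i^n$. Concretely, one must check that a cocone over one of these diagrams in $\linear$, which a priori could use the $d$-maps in its legs, still factors uniquely through $n+1$. I would handle this by the same argument structure as Lemma~\ref{lemma:pres-graph}: using density of $I$ to reduce any competing cocone apex to a colimit of $1$'s and $2$'s, and then using the relations~\eqref{eq:simpl-ax} (in particular $d_i^n s_i^n=\id$ and the commutations) to show the $d$-components of a cocone are forced, so that the universal property in $\linear$ reduces to the one already known in $\tlinear$ (or directly in the graph category $\graph$). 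Once this is in place, the three numbered properties drop out exactly as in Proposition~\ref{prop:cons-cocompl}, and the remaining routine checks (naturality, compatibility with source/target maps) are identical to the unlabeled insertion-only case and can be omitted.
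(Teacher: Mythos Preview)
Your proposal is correct and follows essentially the same route as the paper: the paper's proof simply asserts that the diagrams~\eqref{eq:pres-lin} and~\eqref{eq:pres-trans} still admit $n{+}1$ as colimit in the enlarged category~$\linear$, that $0$ is still initial, and that the resulting limit in~$\Set$ is still the set of length-$n$ paths, then defers to Proposition~\ref{prop:cons-cocompl}. Your write-up is more explicit about \emph{why} the colimits persist (the obstacle paragraph), which the paper does not spell out, but the overall structure and the diagrams invoked are identical.
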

\begin{proof}
  The diagrams of the form~\eqref{eq:pres-lin} and~\eqref{eq:pres-trans} used in
  proof of Proposition~\ref{prop:cons-cocompl} still admit the same colimit
  $n+1$ with the new definition of~$\linear$ and $0$ is still initial. It can be
  checked that the limit of the image under a presheaf $P\in\hat\linear$ of a
  diagram~\eqref{eq:pres-lin} is still the set of paths of length~$n$ in the
  underlying graph of~$P$.
\end{proof}

\noindent
Lemma~\ref{lemma:pres-graph} is also still valid:

\begin{lemma}
  A presheaf $P\in\hat\linear$ preserves finite limits, if and only if it sends
  the colimits of diagrams of the form
  \begin{equation*}
    \El(G)\xrightarrow{\pi_G}\graph\xrightarrow{I}\linear
  \end{equation*}
  to limits in~$\Set$, where $G\in\hat\graph$ is a finite pointed graph such
  that the above diagram admits a colimit. Such a diagram in~$\linear$ is said
  to be \emph{generated by the pointed graph~$G$}.
\end{lemma}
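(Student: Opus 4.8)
The plan is to mirror, essentially verbatim, the proof of Lemma~\ref{lemma:pres-graph} given in Section~\ref{sec:cocompl}, checking at each step that the minor differences introduced by allowing partial functions (equivalently, by working with $\PGraph$ and a three-object dense subcategory $\G$ containing $0$, $1$ and $2$) do not break the argument. First I would recall that, to test whether $P\in\hat\linear$ preserves finite limits, it suffices to test preservation of colimits of finite diagrams in $\linear$ \emph{that actually admit a colimit}; so the task is to reduce an arbitrary such diagram, up to having the same colimit, to one of the form $\El(G)\xrightarrow{\pi_G}\graph\xrightarrow{I}\linear$ for a finite (pointed) graph $G$.

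The reduction proceeds in the same stages as before. Using that $I:\graph\to\linear$ is dense in this new setting (which was established just above, via the equivalence $\pGraph\simeq\PGraph$ and the identification of the $N_In$ with a full subcategory of $\hat\G$), every object $n$ of $\linear$ is canonically the colimit of $\El(N_In)\xrightarrow{\pi}\graph\xrightarrow{I}\linear$; hence any finite diagram $K:\J\to\linear$ may be replaced by one $K':\J'\to\linear$ with the same colimit in which the only objects occurring are $0$, $1$ and $2$. Next, since limits commute with limits, one removes identity arrows and is left with diagrams whose non-trivial arrows are among $s_0^1,s_1^1:1\to2$ together with the arrows involving $0$ (the maps $s_0^0:0\to1$ and $d_0^0,d_1^1,\dots$); one then argues, exactly as in the $\tlinear$ case, that a diagram in which an object $2$ is the target of two arrows labeled $s_0^1$ is final-equivalent to the one obtained by gluing those arrows (and their sources), that an object $2$ which is the target of \emph{no} arrow $s_1^1$ (resp.\ $s_0^1$) may be completed by adjoining a fresh source object $1$, and similarly that the extra object $0$ with its structure maps can be absorbed, so that one may assume every object $2$ is the target of exactly one $s_0^1$ and one $s_1^1$. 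Such a diagram is precisely one built by gluing copies of $\vxym{1\ar[r]^{s_1^1}&2&1\ar[l]_{s_0^1}}$ along objects $1$, i.e.\ is of the form $\El(G)\xrightarrow{\pi_G}\graph\xrightarrow{I}\linear$ for a finite graph $G$ whose vertices are the objects $1$ and whose edges are the objects $2$, the base point of $G$ recording the component coming from the object $0$.

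I would then remark, as in the cited lemma, that this is an instance of the comprehensive factorization system on $\Cat$: every diagram factors as a final functor followed by a discrete fibration, and discrete fibrations over $\graph$ are exactly (pointed) graphs, which gives the statement uniformly. The main obstacle I anticipate is purely bookkeeping: making sure the finality claims still hold once the object $0$ and the morphisms $s_0^0$, $d_0^0$, $d_0^1$, $d_1^1$ are in play — in particular that a diagram involving $0$ reduces, up to finality, to one living over $\graph$ rather than over $\G$, so that the resulting combinatorial object is genuinely a \emph{pointed} graph as asserted. Everything else is a transcription of the earlier proof, so I would keep the exposition brief and refer back to it for the cocone bijections and the finality verifications.
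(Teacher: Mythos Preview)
Your proposal and the paper swap roles. The paper explicitly \emph{abandons} the hand reduction you make your main argument, pointing out that with deletions present the categories $\El(N_I0)$ and $\El(N_I1)$ are already large (it draws them), so the case analysis becomes unwieldy; it then promotes what you call a closing ``remark'' --- the comprehensive factorization --- to the actual proof. Concretely: every diagram $K:\J\to\linear$ factors as a final functor followed by a discrete fibration; discrete fibrations over $\linear$ correspond to presheaves on $\linear$; and density of $I$ together with interchange of colimits reduces further to presheaves on $\G$, which are exactly pointed graphs. That two-line abstract argument \emph{is} the paper's proof.

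Your hand reduction, as written, has a real gap beyond mere bookkeeping. In the insertion-only case the only non-trivial arrows among $1,2$ were $s_0^1,s_1^1:1\to2$, so every diagram was visibly a gluing of spans $1\xrightarrow{s_1^1}2\xleftarrow{s_0^1}1$. Here you also have $d_0^1,d_1^1:2\to1$ and the maps through $0$, and you do not explain which finality moves eliminate those. More seriously, your endpoint is a diagram built solely from the span pattern over the \emph{two}-object category $\{1,2\}$ --- but the lemma's conclusion is a diagram $\El(G)\to\G\to\linear$ with $\G=\{0,1,2\}$ and $G$ a \emph{pointed} graph, whose element category necessarily contains an object over $0$. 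If you have ``absorbed'' all occurrences of $0$, you are no longer producing such a diagram, and the phrase ``the base point of $G$ recording the component coming from the object $0$'' is not a definition of anything. The clean fix is precisely the paper's: drop the hand reduction and run the comprehensive-factorization argument directly.
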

\begin{proof}
  The proof of Lemma~\ref{lemma:pres-graph} was done ``by hand'', but we
  mentioned a more abstract alternative proof. In the present case, a similar
  proof can be done but would be really tedious, so we provide the abstract
  one. In order to illustrate why we have to do so, one can consider the
  category of elements associated to the presheaves representable by $0$ and
  $1$, which are clearly much bigger than in the case of
  Section~\ref{sec:cocompl}:
  \[
  \El(N_I0)
  \qcong
  \vcenter{
    \xymatrix@C=10ex{
      \star\ar@<+.5ex>[r]^{s_0^0}&\ar@<+.5ex>[l]^{d_0^0}\bot\ar@<+2.5ex>[r]^{s^1_0}\ar@<+1ex>[r]|{s^1_1}&\ar@<+1ex>[l]|{d^1_0}\ar@<+2.5ex>[l]^{d^1_1}(\bot,\bot)
    }
  }
  \]
  and
  \[
  \El(N_I1)
  \qcong
  \vcenter{
    \xymatrix@C=10ex{
      &&\ar@<-.5ex>[dl]|{d^1_0}\ar@<+1ex>[dl]|{d^1_1}(\bot,\bot)\\
      &\ar@<+.5ex>[dl]|{d_0^0}\bot\ar@<+4ex>[ur]^{s^1_0}\ar@<+2.5ex>[ur]|{s^1_1}\ar[r]|{s^1_1}\ar[dr]|>>>>>>>>{s^1_0}&\ar@<+.5ex>[dl]|>>>>>>>>>>{d^1_0}(\bot,1)\\
      \star\ar@<+.5ex>[ur]^{s_0^0}\ar[r]_{s^0_0}&1\ar@<+.5ex>[ur]|>>>>>>>>{s^1_0}\ar@<+.5ex>[r]|{s^1_1}&\ar@<+.5ex>[l]^{d^1_1}(1,\bot)
    }
  }
  \]
  subject to relations which follow directly from~\eqref{eq:g-ax}.

  Before going on with the proof, we need to introduce a few notions. A functor
  $F:\C\to\D$ is called \emph{final} if for every category~$\E$ and diagram
  $G:\D\to\E$ the canonical morphism $\colim(G\circ F)\to\colim(G)$ is an
  isomorphism~\cite{maclane:cwm}: restricting a diagram along $F$ does not
  change its colimit. Alternatively, these functors can be characterized as
  functors such that for every object $D\in\D$ the category is non-empty and
  connected (there is a zig-zag of morphisms between any two objects). A functor
  $F:\C\to\D$ is called a \emph{discrete fibration} if for any object $C\in\C$
  and morphism $g:D\to FC$ in~$\D$ there exists a unique morphism $f:C'\to C$
  in~$\C$ such that $Ff=g$ called the lifting of~$g$. To any such discrete
  fibration one can associate a presheaf~$P\in\hat\D$ defined on any $D\in\D$ by
  $PD=F^{-1}(D)=\setof{C\in\C}{FC=D}$ and on morphisms~$g:D'\to D$ as the
  function $Pg$ which to $C\in PD$ associates the source of the lifting of $g$
  with codomain~$C$. Conversely, any presheaf $P\in\hat\D$ induces a discrete
  fibration $\El(P)\xrightarrow\pi\D$, and these two operations induce an
  equivalence of categories between the category~$\hat\D$ and the category of
  discrete fibrations over~$\D$. It was shown by Paré, Street and
  Walters~\cite{pare1973connected,street1973comprehensive} that any functor
  $F:\C\to\D$ factorizes as final functor $J:\C\to\E$ followed by a discrete
  fibration $K:\E\to\D$, and this factorization is essentially unique: this is
  called the \emph{comprehensive factorization} of a functor. More explicitly,
  the functor~$K$ can be defined as follows. The inclusion functor $\Set\to\Cat$
  which send a set to the corresponding discrete category admits a left adjoint
  $\Pi_0:\Cat\to\Set$, sending a category to its connected components (its set
  of objects quotiented by the relation identifying two objects linked by a
  zig-zag of morphisms). The discrete fibration part~$K$ above can be defined as
  $\El(P)\xrightarrow\pi\D$ where $P\in\hat\D$ is the presheaf defined by
  $P=\Pi_0(-/F)$. In this precise sense, every diagram~$F$ in $\D$ is
  ``equivalent'' to one which is ``generated'' by a presheaf~$P$ on~$\D$ (we
  adopted this informal terminology in the article in order to avoid having to
  introduce too many categorical notions).

  In our case, we can thus restrict to diagrams in~$\linear$ generated by
  presheaves on $\linear$. Finally, since $I:\graph\to\linear$ is dense, we can
  further restrict to diagrams generated by presheaves on~$\graph$ by
  interchange of colimits.
\end{proof}

\noindent
Lemma~\ref{lemma:graphs-with-colim} applies almost as in
Section~\ref{sec:cocompl}: since the morphism $f:G\to N_In$ (seen as a partial
functions between graphs) has to satisfy the universal property of
Lemma~\ref{lemma:diag-colim}, by choosing for every vertex~$x$ of~$G$ a partial
function $g_x:G\to N_Im$ which is defined on~$x$ (such a partial function always
exists), it can be shown that the function $f$ has to be total. The rest of the
proof can be kept unchanged. Similarly, Proposition~\ref{prop:fcfc-linear}
applies with proof unchanged.

Finally, we have that
\begin{theorem}
  The free conservative finite cocompletion $\concurrent$ of $\linear$ is
  equivalent to the full subcategory of $\hat\linear$ whose objects are
  presheaves~$P$ satisfying the conditions of
  Proposition~\ref{prop:plinear-cond}. Since its objects~$P$ satisfy
  $I_*I^*(P)\cong P$, it can equivalently be characterized as the full
  subcategory of $\hat\graph$ whose objects~$P$ are
  \begin{enumerate}
  \item \emph{finite},
  \item \emph{transitive}: for each non-empty path $x\twoheadrightarrow y$ there
    exists exactly one edge $x\to y$,
  \item \emph{pointed}: $P(0)$ is reduced to one element.
  \end{enumerate}
\end{theorem}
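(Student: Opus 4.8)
The plan is to assemble the results of this section in the same way that Theorem~\ref{thm:fcfc} was deduced from Proposition~\ref{prop:fcfc-linear} in the insertion-only case. By Theorem~\ref{thm:cons-cocomp}, the category $\concurrent$ is, up to equivalence, the full subcategory of $\hat\linear$ whose objects are the presheaves that preserve finite limits and are finite colimits of representables. The version of Lemma~\ref{lemma:pres-graph} proved above --- here obtained through the comprehensive factorization rather than by hand, since the categories of elements $\El(N_I0)$ and $\El(N_I1)$ are now much larger --- reduces finite-limit preservation to sending to limits in $\Set$ the colimit of every diagram $\El(G)\xrightarrow{\pi_G}\graph\xrightarrow{I}\linear$ generated by a finite pointed graph $G$ that admits such a colimit. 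The adaptation of Lemma~\ref{lemma:graphs-with-colim}, whose only genuinely new ingredient is that the mediating map $f:G\to N_In$ of Lemma~\ref{lemma:diag-colim} is forced to be total (choose, for each vertex $x$ of $G$, a map $g_x:G\to N_Im$ defined at $x$), then shows that such diagrams are exactly those generated by tree-shaped pointed graphs; and the adaptation of Proposition~\ref{prop:fcfc-linear} narrows the diagrams to be tested to those generated by the line graphs $G_n$ and by the single-transitivity-edge graphs. Preservation of the first family forces $P(0)$ to be a singleton and $P(n+1)$ to be the set of length-$n$ paths in the underlying graph of $P$; preservation of the second forces a unique edge $x\to y$ for every non-empty path $x\twoheadrightarrow y$; and membership in $\concurrent$ forces $P$ to be a finite colimit of the representables $N_I(0),N_I(1),N_I(2)$, hence finiteness. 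These are precisely the conditions of Proposition~\ref{prop:plinear-cond}.

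For the converse, I would check that any presheaf $P$ satisfying conditions (1)--(3) of Proposition~\ref{prop:plinear-cond} preserves the relevant colimits. The colimit in $\linear$ of a diagram generated by a tree-shaped pointed graph is computed explicitly, as in the generalization of Proposition~\ref{prop:pushouts} to partial functions (glue, quotient by the forced identifications, then take the transitive closure of the induced relation). Applying $P$ to such a diagram only involves the sets $P(0)$, $P(1)$, $P(2)$ and their structure maps, and conditions (2)--(3) say precisely that a length-$n$ path in the underlying graph of $P$ is the same datum as a compatible $n$-tuple of edges and that each non-empty path carries a unique transitivity edge; a direct check then shows that $P$ sends the colimit to the corresponding limit of sets. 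The same bookkeeping exhibits $P$ itself as a finite colimit of copies of $N_I(1)$ and $N_I(2)$ glued along $N_I(0)$'s according to its underlying graph, so $P$ lies in $\concurrent$. This establishes the equivalence of $\concurrent$ with the full subcategory of $\hat\linear$ cut out by Proposition~\ref{prop:plinear-cond}.

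It then remains to transport this description to $\hat\graph$. For a presheaf $P$ of the above kind, $P(0)$ is terminal and, by condition (3), $P(n+1)$ is the set of length-$n$ paths of the underlying graph $I^*P$; this is exactly how the right adjoint $I_*$ to the restriction functor $I^*:\hat\linear\to\hat\graph$ acts, so the unit $P\to I_*I^*P$ is an isomorphism. Hence $I^*$ is fully faithful on this subcategory, with inverse $I_*$, and its essential image is readily identified with the full subcategory of $\hat\graph$ on the presheaves that are finite, transitive (a unique edge for each non-empty path) and pointed ($P(0)$ a singleton), which --- via the pointed-graph picture of Proposition~\ref{prop:pgraph-eq} --- is the announced concrete description by finite sets equipped with a transitive relation and partial functions.

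I expect the two places requiring real care to be exactly the two adaptations flagged above: replacing the elementary proof of Lemma~\ref{lemma:pres-graph} by the comprehensive-factorization argument, and establishing that the mediating morphism in Lemma~\ref{lemma:diag-colim} must be total. The latter is the only point where passing from total to partial functions changes the shape of the argument, and it is what one must get right to be sure that ``tree-shaped'' is still the correct characterization of the colimit-admitting diagrams; everything else is a faithful rerun of Section~\ref{sec:cocompl}.
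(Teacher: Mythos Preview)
Your proposal is correct and follows essentially the same route as the paper: the paper does not give a standalone proof of this theorem but simply records that Lemmas~\ref{lemma:pres-graph}, \ref{lemma:diag-colim}, \ref{lemma:graphs-with-colim} and Proposition~\ref{prop:fcfc-linear} carry over with the two adaptations you flag (comprehensive factorization in place of the by-hand argument, and the totality of the mediating map $f:G\to N_In$), and then reads off the $\hat\graph$ description from $P\cong I_*I^*P$. Your write-up is more explicit than the paper's, but the decomposition, the key lemmas invoked, and the two delicate points you single out are exactly the ones the paper relies on.
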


\noindent
From this characterization (which can easily be extended to the labeled case),
along with the correspondence between pointed graphs and graphs with partial
functions (Proposition~\ref{prop:pgraph-eq}), the category is shown to be
equivalent to the category described in Theorem~\ref{theorem}: the relation is
defined on vertices $x,y$ of a graph~$G$ by $x<y$ whenever there exists a path
$x\twoheadrightarrow y$.

As in case of previous section, the forgetful functor $\pGraph\to\Graph$ admits
a left adjoint, thus inducing a monad on $\Graph$. The category $\pGraph$ is
equivalent to the Kleisli category associated to this monad, which is closely
related to the \emph{exception monad} as discussed in~\cite{bosker}.

\section{Modeling repositories}
\label{sec:repos}
We briefly detail here the modeling of repositories evoked in
Section~\ref{sec:future-works}. As explained in the introduction, repositories
can be modeled as partially ordered sets of patches, \ie morphisms
in~$\linear$. Since some of them can be incompatible, it is natural to model
them as particular labeled event structures.

\begin{definition}
  An \emph{event structure} $(E,\leq,\#)$ consists of a set $E$ of
  \emph{events}, a partial order relation~$\leq$ on $E$ and
  \emph{incompatibility} relation on events.
  We require that
  \begin{enumerate}
  \item for any event $e$, the downward closure of $\set{e}$ is finite and
  \item given $e_1$, $e_1'$ and $e_2$ such that $e_1\leq e_1'$ and $e_1\# e_2$,
    we have $e_1'\#e_2$.
  \end{enumerate}
\end{definition}
Two events $e_1$ and $e_2$ are \emph{compatible} when they are not incompatible,
and \emph{independent} when they are compatible and neither $e_1\leq e_2$ nor
$e_2\leq e_1$. A \emph{configuration} $x$ is a finite downward-closed set of
compatible events. An event $e_2$ is a \emph{successor} of an event $e_1$ when
$e_1\leq e_2$ and there is no event in between. Given an event $e$ we write $\da
e$ for the configuration, called the \emph{cause} of $e$, obtained as the
downward closure of $\set{e}$ from which~$e$ was removed. A \emph{morphism} of
event structures $f:(E,\leq,\#)\to(E',\leq',\#')$ is an injective function
$f:E\to E'$ such that the image of a configuration is a configuration. We
write~$\ES$ for the category of event structures.

To every event structure $E$, we can associate a \emph{trace graph}~$T(E)$ whose
vertices are configurations and edges are of the form
$x\xrightarrow{e}x\uplus\set{e}$ where $x$ is a configuration such that
$e\not\in x$ and $x\uplus\set{e}$ is a configuration. A \emph{trace} is a path
$x\twoheadrightarrow y$ in this graph. Notice that two paths
$x\twoheadrightarrow y$ are of the same length. Moreover, given two
configurations $x$ and $y$ such that $x\subseteq y$, there exists necessarily a
path~$x\twoheadrightarrow y$. It can be shown that this operation provides a
faithful embedding $T:\ES\to\Graph$ from the category of event structures to the
category of graphs, which admits a right
adjoint.

\begin{example}
  An event structure with five events is pictured on the left (arrows represent
  causal dependencies and $\sim$ incompatibilities). The associated trace graph
  is pictured on the right.
  \[
  \vxym{
    &d\ar@{~}[drr]&\\
    b\ar[ur]&&\ar[ul]c\ar@{~}[r]&c'\\
    &\ar[ul]a\ar[ur]\ar[urr]&\\
  }
  \hspace{12ex}
  {
    \vcenter{
      \xymatrix@C=2ex@R=2ex{
        &&\set{a,b,c,d}\\
        &&\ar[u]^-d\set{a,b,c}\\
        \set{a,b,c'}&\ar[l]_-{c'}\ar[ur]^-c\set{a,b}&&\ar[ul]_-b\set{a,c}\\
        &\ar[ul]^-b\set{a,c'}&\ar[l]^-{c'}\ar[ul]_-b\ar[ur]_-c\set{a}&\\
        &&\emptyset\ar[u]^-a&\\
      }
    }
  }
  \]
\end{example}


\begin{definition}
  A \emph{categorical event structure} $(E,\lambda)$ in a category $\C$ with an
  initial object consists of an event structure equipped with a \emph{labeling}
  functor $\lambda:(TE)^*\to\C$, where $(TE)^*$ is the free category generated
  by the graph $TE$, such that $\lambda\emptyset$ is the initial object of~$\C$
  and the image under~$\lambda$ of every square
  \[
  \vxym{
    &z&\\
    y_1\ar[ur]^{e_2}&&\ar[ul]_{e_1}y_2\\
    &\ar[ul]^{e_1}x\ar[ur]_{e_2}&\\
  }
  \]
  in $TE$ is a pushout in~$\C$.
\end{definition}

The following proposition shows that a categorical event structure is
characterized by a suitable labeling of events of~$E$ by morphisms of~$\C$.

\begin{proposition}
  The functor $\lambda$ is characterized, up to isomorphism, by the image of
  the transitions $\da e\xrightarrow{e}\da e\uplus\set{e}$.
\end{proposition}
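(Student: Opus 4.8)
The plan is to reconstruct $\lambda$ explicitly from the morphisms $\lambda(\da e\xrightarrow{e}\da e\uplus\set e)$ by induction on the number of events of a configuration, using the pushout hypotheses to propagate this data along the trace graph; ``characterized up to isomorphism'' then means that any two labeling functors equipped with a compatible family of isomorphisms of their cause transitions are naturally isomorphic.

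First I would record a preliminary observation: \emph{$\lambda$ sends any two parallel paths of $(TE)^*$ to the same morphism of~$\C$}. A path $x\twoheadrightarrow y$ in $TE$ is the same thing as a linear extension of the finite poset $(y\setminus x,\leq)$ of events added along it (here one uses that $y$ is downward closed and that all events of $y$ are mutually compatible), and any two linear extensions of a finite poset are connected by a sequence of transpositions of adjacent incomparable elements. Each such transposition is exactly a square of $TE$, whose image under $\lambda$ commutes since it is a pushout cocone; hence $\lambda$ applied to a path $x\twoheadrightarrow y$ depends only on its endpoints, and I write $\lambda_{x,y}$ for the resulting morphism.

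The core step is the following. Fix a configuration $y$ with $n\geq 1$ events, choose a maximal event $e\in y$, and put $x=y\setminus\set e$, a configuration with $n-1$ events; since $y$ is downward closed, $\da e\subseteq x$. I claim $\lambda$ sends the square
\[
\vxym{
  \da e\uplus\set e\ar[r]&y\\
  \da e\ar[u]^{e}\ar[r]&x\ar[u]_{e}
}
\]
(horizontal arrows being the evident paths, vertical arrows the $e$-transitions) to a pushout in~$\C$. Indeed, pick a path $\da e=c_0\to\dots\to c_k=x$ adding events $f_1,\dots,f_k$; for each $i$ the events $f_i$ and $e$ are independent ($f_i\notin\da e$ and $e\notin x$, so neither causes the other, and both lie in the configuration $y$), so the diamond on $c_{i-1}$ determined by $f_i$ and $e$ is a square of $TE$ and its image is a pushout. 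By the pasting lemma for pushouts these $k$ squares compose horizontally to a pushout rectangle, which is precisely the displayed square (read in $(TE)^*$, using the preliminary observation for its bottom edge). Consequently
\[
\lambda(y)\;\cong\;\lambda(x)\;+_{\lambda(\da e)}\;\lambda(\da e\uplus\set e)
\]
along $\lambda_{\da e,x}$ and $\lambda(\da e\xrightarrow{e}\da e\uplus\set e)$, with the $e$-transition $\lambda(x\xrightarrow{e}y)$ being the corresponding pushout injection (when $x=\da e$ the square degenerates and $y=\da e\uplus\set e$ is given directly). Note that the required pushout automatically exists in~$\C$, since $\lambda(y)$ together with the evident cocone realizes it.

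Finally I would assemble this by induction on the number of events: given $\lambda,\lambda'$ with a compatible family of isomorphisms of cause transitions, $\psi_\emptyset$ is the unique isomorphism of initial objects, and $\psi_y$ is induced from $\psi_x$, $\psi_{\da e}$ and the given isomorphism on the $e$-th cause transition via the pushout presentation above; the universal property makes $\psi$ natural with respect to every $e$-transition, hence — since every generating transition of $(TE)^*$ adds an event maximal in its target — with respect to all morphisms. The main obstacle is exactly this last coherence point: one must verify that the presentation of $\lambda(y)$ as a pushout, and the resulting $\psi_y$, do not depend on the chosen maximal event $e$ (different maximal events give different pushout descriptions of the \emph{same} object $\lambda(y)$, so the independence follows from uniqueness of pushouts together with functoriality, but it should be checked carefully); the supporting combinatorial facts — paths in $TE$ correspond to linear extensions, and the linear-extension graph of a finite poset is connected via adjacent transpositions — are standard but deserve an explicit mention.
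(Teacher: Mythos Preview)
The paper states this proposition without proof (it sits in the appendix on repositories, immediately followed by the definition of a repository with no intervening \texttt{proof} environment), so there is nothing to compare your approach against. Your argument is correct and is the natural one: first show path-independence of $\lambda$ via the connectedness of the linear-extension graph, then exhibit $\lambda(y)$ as a pushout of $\lambda(y\setminus\set e)$ and $\lambda(\da e\uplus\set e)$ over $\lambda(\da e)$ by pasting the elementary pushout squares along any path $\da e\twoheadrightarrow y\setminus\set e$, and finally propagate an isomorphism of cause transitions to a natural isomorphism of labeling functors by induction on the size of the configuration. Your identification of the one genuine subtlety---independence of $\psi_y$ from the choice of maximal $e$---is exactly right; the cleanest way to discharge it is to take two maximal events $e,e'$ of $y$, apply your core step twice to present $\lambda(y)$ as an iterated pushout in either order, and use the interchange of pushouts (plus the inductive hypothesis on $y\setminus\set{e,e'}$) to see that the two induced $\psi_y$'s coincide.
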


We can now define a \emph{repository} to be simply a finite categorical event
structure $(E,\leq,\#,\lambda:T(E)\to\linear)$. Such a repository extends to a
categorical event structure $(E,\leq,\#_0,I\circ\lambda:T(E)\to\concurrent)$,
where $\#_0$ is the empty conflict relation. The \emph{state}~$S$ of such an
event structure is the file obtained as the image $S=I\circ\lambda(E)$ of the
maximal configuration: this is the file that the users is currently editing
given his repository. Usual operations on repositories can be modeled in this
context, for instance importing the patches of another repository is obtained by
a pushout construction (the category of repositories is finitely cocomplete).

\end{document}